\def\ps@headings{%
\def\@oddhead{\mbox{}\scriptsize\rightmark \hfil \thepage}%
\def\@evenhead{\scriptsize\thepage \hfil \leftmark\mbox{}}%
\def\@oddfoot{}%
\def\@evenfoot{}}
\makeatother \pagestyle{headings}
\newtheorem{proposition}{Proposition}
\newtheorem{theorem}{Theorem}
\newtheorem{lemma}{Lemma}
\newtheorem{corollary}{Corollary}
\newcommand{\df}{\stackrel{\mbox{\scriptsize def}}{=}}
\newcommand{\rk}{\mathrm{rk}}
\newcommand{\Ar}{A_{\mbox{\tiny{R}}}}
\newcommand{\Ac}{A_{\mbox{\tiny{C}}}}
\newcommand{\ar}{a_{\mbox{\tiny{R}}}}
\newcommand{\dr}{d_{\mbox{\tiny{R}}}}
\newcommand{\ds}{d_{\mbox{\tiny{S}}}}
\newcommand{\di}{d_{\mbox{\tiny{I}}}}
\newcommand{\Jr}{J_{\mbox{\tiny{R}}}}
\newcommand{\Nr}{N_{\mbox{\tiny{R}}}}
\newcommand{\Nc}{N_{\mbox{\tiny{C}}}}
\begin{document}
\title{Constant-Rank Codes and Their Connection\\ to Constant-Dimension Codes}
%\author{Maximilien Gadouleau and Zhiyuan Yan\\
%Department of Electrical and Computer Engineering \\
%Lehigh University, PA 18015, USA\\ E-mails: \{magc,
%yan\}@lehigh.edu}
\author{Maximilien Gadouleau,~\IEEEmembership{Member,~IEEE,}
and Zhiyuan Yan,~\IEEEmembership{Senior Member,~IEEE}%
\thanks{This work was supported in part by Thales Communications
Inc. and in part by a grant from the Commonwealth of Pennsylvania,
Department of Community and Economic Development, through the
Pennsylvania Infrastructure Technology Alliance (PITA). The work of Zhiyuan Yan was supported in part by a
summer extension grant from Air Force Research Laboratory, Rome, New
York, and the work of Maximilien Gadouleau was supported in part by the ANR project RISC.  Part of the
material in this paper was presented at 2008 IEEE International Symposium on Information Theory (ISIT 2008) in Toronto, Canada and 2008 IEEE International Workshop on Wireless Network Coding (WiNC 2008) in San Francisco, California, USA.} %
\thanks{Maximilien Gadouleau was with the Department of Electrical and Computer Engineering, Lehigh University, Bethlehem, PA 18015 USA. Now he is with CReSTIC, Universit\'e de Reims Champagne-Ardenne, Reims 51100 France. Zhiyuan Yan is with the Department of Electrical and Computer Engineering, Lehigh University, Bethlehem, PA, 18015 USA (e-mail: maximilien.gadouleau@univ-reims.fr; yan@lehigh.edu).}}

\maketitle

\thispagestyle{empty}

\begin{abstract}
Constant-dimension codes have recently received attention due to
their significance to error control in noncoherent random linear network coding. What the maximal cardinality of any constant-dimension code
with finite dimension and minimum distance is and how to construct
the optimal constant-dimension code (or codes) that achieves the
maximal cardinality both remain open research problems. In this
paper, we introduce a new approach to solving these two problems. We
first establish a connection between constant-rank codes and
constant-dimension codes. Via this connection, we show that optimal
constant-dimension codes correspond to optimal constant-rank codes
over matrices with sufficiently many rows. As such, the two
aforementioned problems are equivalent to determining the maximum
cardinality of constant-rank codes and to constructing optimal
constant-rank codes, respectively. To this end, we then derive bounds on
the maximum cardinality of a constant-rank code with a given minimum
rank distance, propose explicit constructions of optimal or
asymptotically optimal constant-rank codes, and establish asymptotic
bounds on the maximum rate of a constant-rank code.
\end{abstract}

\begin{keywords}
Network coding, random linear network coding, error control codes, subspace codes, constant-dimension codes, constant-weight codes, rank metric codes, subspace metric, injection metric.
\end{keywords}

\section{Introduction}\label{sec:intro}
While random linear network coding \cite{ho_isit03, chou_allerton03,
ho_it06} has proved to be a powerful tool for disseminating
information in networks, it is highly susceptible to errors caused
by various sources, such as noise, malicious or malfunctioning nodes,
or insufficient min-cut. If received packets are linearly combined
at random to deduce the transmitted message, even a single error in
one erroneous packet could render the entire transmission useless.
Thus, error control for random linear network coding is critical and has
received growing attention recently. Error control schemes proposed
for random linear network coding assume two types of transmission models:
some \cite{cai_itw02, song_it03,yeung_cis06,cai_cis06,zhang_it08}
depend on and take advantage of the underlying network topology or
the particular linear network coding operations performed at various
network nodes; others \cite{koetter_it08,silva_it08} assume that the
transmitter and receiver have no knowledge of such channel transfer
characteristics. The contrast is similar to that between coherent
and noncoherent communication systems.

Error control for noncoherent random linear network coding was first
considered in \cite{koetter_it08}\footnote{A related work
\cite{jaggi_infocom07} considers security issues in noncoherent
random linear network coding.}. Motivated by the property that random linear network coding is vector-space preserving, an operator channel that
captures the essence of the noncoherent transmission model
%and a subspace metric
was defined in \cite{koetter_it08}. Similar to codes defined in
complex Grassmannians for noncoherent multiple-antenna channels,
codes defined in Grassmannians over a finite field
\cite{delsarte_jct76, chihara_siam87}
%and used with the subspace
%distance (cf. \cite[(3)]{koetter_it08})
play a significant role in error control for noncoherent random linear network coding. We refer to these codes as constant-dimension codes
(CDCs) henceforth. These codes can use either the subspace metric
\cite{koetter_it08} or the injection metric
\cite{silva_it09}. The standard advocated approach to random linear network coding (see, e.g., \cite{chou_allerton03}) involves
transmission of packet headers used to record the particular linear
combination of the components of the message present in each
received packet. From coding theoretic perspective, the set of
subspaces generated by the standard approach may be viewed as a
\textbf{suboptimal} CDC with minimum injection distance $1$ in a
Grassmannian, because the whole Grassmannian forms a CDC with
minimum injection distance $1$ \cite{koetter_it08}. Hence, studying
random linear network coding from coding theoretic perspective results in
better error control schemes.

General studies of subspace codes
%(also referred to as codes in
%projective space or projective geometry)
started only recently (see, for example, \cite{etzion_isit08,
gabidulin_isit08}). On the other hand, there is a steady stream of
works related to codes in Grassmannians. For example, Delsarte
\cite{delsarte_jct76} proved that a Grassmannian endowed with the
injection distance forms an association scheme, and derived its
parameters. The nonexistence of perfect codes in  Grassmannians
was proved in \cite{chihara_siam87,martin_dcc95}. In
\cite{ahlswede_dcc01}, it was shown that Steiner structures yield
diameter-perfect codes in  Grassmannians; properties and
constructions of these structures were studied in
\cite{schwartz_jct02}; in \cite{xia_dcc09}, it was shown that
Steiner structures result in optimal CDCs. Related work on certain
intersecting families and on byte-correcting codes can be found in
\cite{frankl_jct86} and \cite{etzion_it98}, respectively. An
application of codes in Grassmannians to linear authentication
schemes was considered in \cite{wang_it03}. In \cite{koetter_it08},
a Singleton bound for CDCs and a family of codes that are
\textbf{nearly} Singleton-bound-achieving are proposed, a
recursive construction of CDCs which outperform the codes in
\cite{koetter_it08} was given in \cite{skachek_it10}, while a class of codes with even greater cardinality was given in \cite{etzion_it09}. Despite the \textbf{asymptotic optimality} of the Singleton bound and the codes
proposed in \cite{koetter_it08}, neither is optimal in finite cases:
upper bounds tighter than the Singleton bound exist and can be
achieved in some special cases \cite{xia_dcc09}. Thus, two research problems about CDCs remain open: the \textbf{maximal
cardinality} of a CDC with \textbf{finite} dimension and minimum
distance is yet to be determined, and it is not clear how to
construct an optimal code that achieves the maximal cardinality.

In this paper, we introduce a novel approach to solving the two
aforementioned problems. Namely, we aim to solve these problems via
%rank metric codes, in particular,
constant-rank codes (CRCs), which
are the counterparts in rank metric codes of constant Hamming weight
codes. There are several reasons for our approach. First, it is
difficult to solve the two problems above directly based on CDCs since
projective spaces lack a natural group structure \cite{silva_it08}.
Also,
%based on the subspace metric, the
%projective space is not distance-regular, and thus it is difficult
%to adapt well-known results for codes in the Hamming space, which is
%distance-regular. On the other hand,
the rank metric is very similar to the Hamming metric in many aspects,
%(the rank metric
%space is distance-regular),
and hence familiar results from the Hamming space can be readily
adapted. Furthermore, existing results for rank metric codes in the literature are more extensive than those for CDCs.
%there are extensive works on rank metric codes in the literature.
Finally, the rank metric has been shown relevant
to error control for both noncoherent \cite{silva_it08} and coherent
\cite{silva_it09} random linear network coding.

Based on our approach, this paper makes two main contributions. Our
first main contribution is that we establish a connection between
CRCs and CDCs. Via this connection, we show that optimal CDCs
correspond to optimal CRCs over matrices with sufficiently many
rows. This connection converts the aforementioned open research
problems about CDCs into research problems about CRCs, thereby
allowing us to take advantage of existing results on rank metric codes in general to tackle such
problems. Despite previous works on rank metric codes, constant-rank codes per se unfortunately have received little attention in the literature. Our second main contribution is our
investigation of CRCs. In particular, we derive
upper and lower bounds on the maximum cardinality of a CRC, propose
explicit constructions of optimal or asymptotically optimal CRCs,
and establish asymptotic bounds on the maximum rate of CRCs. Our investigation of CRCs not only is important for our construction of CDCs, but also serves as a powerful tool to study CDCs and rank metric codes.

The rest of the paper is organized as follows.
Section~\ref{sec:preliminaries} reviews some necessary background.
In Section~\ref{sec:connection}, we determine the connection between
optimal CRCs and optimal CDCs. In Section~\ref{sec:constant_rank},
we study the maximum cardinality of CRCs, and present our results
on the asymptotic behavior of the maximum rate of a CRC.

\section{Preliminaries}\label{sec:preliminaries}
\subsection{Rank metric codes}\label{sec:rank_metric}
Error correction codes with the rank metric \cite{delsarte_jct78,
gabidulin_pit0185, roth_it91} have been receiving steady attention
in the literature due to their applications in storage systems
\cite{roth_it91}, public-key cryptosystems \cite{gabidulin_lncs91},
space-time coding \cite{lusina_it03}, and network coding
\cite{koetter_it08, silva_it08}. Below we review some important
properties of rank metric codes established in \cite{delsarte_jct78,
gabidulin_pit0185, roth_it91}.

%Consider a vector ${\bf x}$ of length $n$ over $\mathrm{GF}(q^m)$.
%The field $\mathrm{GF}(q^m)$ may be viewed as an $m$-dimensional
%vector space over $\mathrm{GF}(q)$. The rank weight of ${\bf x}$,
%denoted as $\rk({\bf x})$, is defined to be the \emph{maximum}
%number of coordinates of ${\bf x}$ that are linearly independent
%over $\mathrm{GF}(q)$ \cite{gabidulin_pit0185}. For any basis $B_m$
%of $\mathrm{GF}(q^m)$ over $\mathrm{GF}(q)$, each coordinate of
%${\bf x}$ can be expanded to an $m$-dimensional column vector over
%$\mathrm{GF}(q)$ with respect to $B_m$. Thus the rank weight of
%${\bf x}$ is also given by the rank of the $m\times n$ matrix over
%$\mathrm{GF}(q)$ obtained by expanding all the coordinates of ${\bf
%x}$ \cite{gabidulin_pit0185}. We shall assume that the expansions
%are with respect to a given basis $B_m$ of $\mathrm{GF}(q^m)$ over
%$\mathrm{GF}(q)$ henceforth.

For all ${\bf X}, {\bf Y}\in \mathrm{GF}(q)^{m \times n}$, it is
easily verified that $\dr({\bf X},{\bf Y})\df \rk({\bf X} - {\bf
Y})$ is a metric over $\mathrm{GF}(q)^{m \times n}$, referred to as
the \emph{rank metric} henceforth. Please note that the rank metric for the vector representation of rank metric codes is defined differently \cite{gabidulin_pit0185}. Since the connection between the matrix representation of rank metric codes and CDCs is more natural, we consider the matrix representation of rank metric codes henceforth. We
denote the number of matrices of rank $r$ ($0 \leq r \leq
\min\{m,n\}$) in $\mathrm{GF}(q)^{m \times n}$ as $\Nr(q,m,n,r) = {n
\brack r} \alpha(m,r)$ \cite{gabidulin_pit0185}, where $\alpha(m,0)
\df 1$, $\alpha(m,r) \df \prod_{i=0}^{r-1}(q^m-q^i)$, and ${n \brack
r} \df \alpha(n,r)/\alpha(r,r)$ for $r \geq 1$. The term ${n \brack r}$
is often referred to as a Gaussian
binomial~\cite{andrews_book76}, and satisfies
\begin{equation}
    q^{r(n-r)} \leq {n \brack r} < K_q^{-1} q^{r(n-r)}
    \label{eq:Gaussian}
\end{equation}
for all $0 \leq r \leq n$, where $K_q = \prod_{j=1}^\infty
(1-q^{-j})$ \cite{gadouleau_it08_dep}. $K_q^{-1}$ decreases with $q$
and satisfies $1 < K_q^{-1} \leq K_2^{-1} < 4$.
%The volume of a ball with rank radius $r$ in $\mathrm{GF}(q^m)^n$ is
%denoted as $\Vr(q,m,n,r) = \sum_{i=0}^r \Nr(q,m,n,i)$.
We denote the volume (i.e., the number of points) of the intersection of two \textbf{spheres} in
%$\mathrm{GF}(q^m)^n$
$\mathrm{GF}(q)^{m \times n}$ of radii $r$ and $s$ and with rank distance $d$ between
their centers as $\Jr(q,m,n,r,s,d)$. A closed-form formula for
$\Jr(q,m,n,r,s,d)$ is determined in \cite{gadouleau_cl09}.

A rank metric code is a subset of $\mathrm{GF}(q)^{m \times n}$, and
its {\em minimum rank distance}, denoted as $\dr$, is simply the
minimum rank distance over all possible pairs of distinct codewords.
It is shown in \cite{delsarte_jct78, gabidulin_pit0185, roth_it91}
that the minimum rank distance of a code of cardinality $M$ in
$\mathrm{GF}(q)^{m \times n}$ satisfies $\dr \leq n-\log_{q^m}M+1.$
In this paper, we refer to this bound as the Singleton bound for
rank metric codes and codes that attain the equality as maximum rank
distance (MRD) codes. We refer to the subclass of MRD codes
introduced in \cite{kshevetskiy_isit05} as generalized Gabidulin
codes. These codes are based on the vector view of rank metric
codes, described as follows. The columns of a matrix ${\bf X} \in
\mathrm{GF}(q)^{m \times n}$ can be mapped into elements of the
field $\mathrm{GF}(q^m)$ according to a basis $B_m$ of
$\mathrm{GF}(q^m)$ over $\mathrm{GF}(q)$. Hence ${\bf X}$ can be
mapped into the vector ${\bf x} \in \mathrm{GF}(q^m)^n$, and the
rank of ${\bf X}$ is equal to the maximum number of linearly
independent coordinates of ${\bf x}$. Generalized Gabidulin codes
are linear MRD codes over $\mathrm{GF}(q^m)$ for $m \geq n$. For all
$q$, $1 \leq d \leq r \leq n \leq m$, the number of codewords of
rank $r$ in an $(n, n-d+1, d)$ linear MRD code over
$\mathrm{GF}(q^m)$ is denoted by $M(q,m,n,d,r)$, and it is known that \cite{gabidulin_pit0185}
\begin{align}
	\nonumber
    &M(q,m,n,d,r)\\
    \label{eq:Mdr_def}
    &= {n \brack r} \sum_{j=d}^r (-1)^{r-j}
    {r \brack j} q^{(r-j)(r-j-1)/2} \left( q^{m(j-d+1)} - 1\right).
\end{align}
We will omit the dependence of the quantities defined above on $q$,
$m$, and $n$ when there is no ambiguity in some proofs.

\subsection{Constant-dimension codes} \label{sec:CDC}

We refer to the set of all subspaces of $\mathrm{GF}(q)^n$ with
dimension $r$ as the Grassmannian of dimension $r$ and denote it as
$E_r(q,n)$, where $|E_r(q,n)| = {n \brack r}$; we refer to $E(q,n) =
\bigcup_{r=0}^n E_r(q,n)$ as the projective space. For $U,V \in
E(q,n)$, their intersection $U \cap V$ is also
a subspace in $E(q,n)$, and we denote the smallest subspace
containing the union of $U$ and $V$ as $U+V$.
Both the \emph{subspace metric} \cite[(3)]{koetter_it08}
$\ds(U,V) \df \dim(U + V) - \dim(U \cap V) = 2\dim(U+V) - \dim(U) -
\dim(V)$ and \emph{injection metric} \cite[Def.~1]{silva_it09}
$\di(U,V) \df \frac{1}{2} \ds(U,V) + \frac{1}{2} |\dim(U) - \dim(V)|
= \dim(U + V) - \min\{\dim(U), \dim(V)\}$ are metrics over $E(q,n)$.

The Grassmannian $E_r(q,n)$ endowed with either the subspace metric
or the injection metric forms an association scheme
\cite{koetter_it08, delsarte_jct76}. Since $\ds(U,V) = 2\di(U,V)$
for all $U,V \in E_r(q,n)$ and  the injection distance provides a
more natural distance spectrum, i.e., $0\leq \di(U,V) \leq r$ for
all $U,V \in E_r(q,n)$, we consider only the injection metric for
Grassmannians and CDCs henceforth.
%We will omit the dependence of the terms defined above in $q$, $n$,
%and $r$ when there is no ambiguity.
We denote the number of subspaces in $E_r(q,n)$ at distance $d$ from
a given subspace as $\Nc(d) = q^{d^2} {r \brack d} {n-r \brack d}$
\cite{koetter_it08}.

A subset of $E_r(q,n)$ is called a constant-dimension code (CDC). We
denote the \textbf{maximum} cardinality of a CDC in $E_r(q,n)$ with
minimum distance $d$ as $\Ac(q,n,r,d)$. Constructions of CDCs and
bounds on $\Ac(q,n,r,d)$ have been given in \cite{koetter_it08,
xia_dcc09, skachek_it10, gabidulin_isit08, kohnert_mmics08, etzion_it09}. In
particular, $\Ac(q,n,r,1) = {n \brack r}$ and it is shown
\cite{koetter_it08, xia_dcc09} for $r \leq \left\lfloor
\frac{n}{2}\right\rfloor$ and $2 \leq d \leq r$,
\begin{equation}\label{eq:bounds_Ac}
    q^{(n-r)(r-d+1)} \leq \Ac(q,n,r,d) \leq
    \frac{{n \brack r-d+1}}{{r \brack r-d+1}}.
\end{equation}

\section{Connection between constant-dimension codes and constant-rank
codes}\label{sec:connection}

In this section, we first establish some connections between the
rank metric and the injection metric. We then define constant-rank
codes and we show how optimal constant-rank codes can be used to
construct optimal CDCs.

Let us denote the row space and the column space of ${\bf X} \in
\mathrm{GF}(q)^{m \times n}$ over $\mathrm{GF}(q)$ as $R({\bf X})$
and $C({\bf X})$, respectively. Following the convention of coding theory, a generator matrix of a subspace $U$ is any full rank matrix whose row space is the subspace $U$. The notations introduced above are
naturally extended to codes as follows: for $\mathcal{C} \subseteq
\mathrm{GF}(q)^{m \times n}$, $C(\mathcal{C}) \df \{ U \in E(q,m) :
\exists\, {\bf M} \in \mathcal{C}, C({\bf M}) = U\}$ and $R(\mathcal{C}) \df
\{ V \in E(q,n) : \exists\, {\bf M} \in \mathcal{C}, R({\bf M}) = V\}$.

%$C(\mathcal{C}) \df \{C({\bf C}) : {\bf C} \in \mathcal{C}\}$ and
%$R(\mathcal{C}) \df \{R({\bf C}) : {\bf C} \in \mathcal{C}\}$.

\begin{lemma}\label{lemma:x_Gamma_Delta}
For $U \in E_r(q,m)$, $V \in E_r(q,n)$, and ${\bf X} \in
\mathrm{GF}(q)^{m \times n}$ with rank $r$, $C({\bf X}) = U$ and
$R({\bf X}) = V$ if and only if there exist a generator matrix ${\bf
G} \in \mathrm{GF}(q)^{r \times m}$ of $U$ and a generator matrix
${\bf H} \in \mathrm{GF}(q)^{r \times n}$ of $V$ such that ${\bf X}
= {\bf G}^T {\bf H}$.
\end{lemma}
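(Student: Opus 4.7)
The plan is to prove the two implications separately, each by a direct linear-algebra argument.

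For the backward direction, assume ${\bf X} = {\bf G}^T {\bf H}$ with ${\bf G} \in \mathrm{GF}(q)^{r \times m}$ a generator matrix of $U$ (hence $\rk({\bf G}) = r$ and $R({\bf G}) = U$) and ${\bf H} \in \mathrm{GF}(q)^{r \times n}$ a generator matrix of $V$. Each column of ${\bf X}$ is a linear combination of the columns of ${\bf G}^T$, so $C({\bf X}) \subseteq C({\bf G}^T) = R({\bf G}) = U$; symmetrically each row of ${\bf X}$ lies in $R({\bf H}) = V$, so $R({\bf X}) \subseteq V$. Since ${\bf G}^T$ has $r$ linearly independent columns and ${\bf H}$ has $r$ linearly independent rows, standard rank inequalities force $\rk({\bf X}) = r$, and therefore the containments must in fact be equalities.

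For the forward direction, the key tool is a full-rank factorization of ${\bf X}$. Since $\rk({\bf X}) = r$, there exist ${\bf A} \in \mathrm{GF}(q)^{m \times r}$ and ${\bf B} \in \mathrm{GF}(q)^{r \times n}$, both of rank $r$, with ${\bf X} = {\bf A}{\bf B}$ (one may construct ${\bf A}$ from any $r$ linearly independent columns of ${\bf X}$ and read off ${\bf B}$ from the coordinates of the remaining columns in this basis). I would then set ${\bf G} \df {\bf A}^T$ and ${\bf H} \df {\bf B}$ so that ${\bf X} = {\bf G}^T {\bf H}$. Because ${\bf B}$ has full row rank, $C({\bf A}{\bf B}) = C({\bf A})$, whence $R({\bf G}) = C({\bf A}^T \! {}^T) = C({\bf A}) = C({\bf X}) = U$; symmetrically, because ${\bf A}$ has full column rank, $R({\bf H}) = R({\bf A}{\bf B}) = R({\bf X}) = V$. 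Hence ${\bf G}$ and ${\bf H}$ are the desired generator matrices.

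There is no real obstacle: the only nontrivial ingredient is the existence of a full-rank factorization, which is a standard fact over any field. The rest reduces to bookkeeping with the identity $C({\bf M}^T) = R({\bf M})$ and the elementary observation that multiplying by a full-rank factor on one side preserves the row or column space of the other factor.
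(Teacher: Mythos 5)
Your proof is correct, and it follows exactly the route the paper intends: the paper omits the proof as straightforward but explicitly remarks that ${\bf X} = {\bf G}^T{\bf H}$ is a rank factorization, which is precisely the full-rank factorization ${\bf X} = {\bf A}{\bf B}$ at the heart of your forward direction. Both directions are handled cleanly, with the only substantive ingredients being the existence of a full-rank factorization and the fact that multiplying by a full-rank factor on one side preserves the row or column space of the other.
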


The proof of Lemma~\ref{lemma:x_Gamma_Delta} is straightforward and
hence omitted. We remark that ${\bf X} = {\bf G}^T {\bf H}$ is
referred to as a rank factorization \cite{rao_book00}.
%Alternatively, ${\bf x} = {\bf b} {\bf H}$, where the expansion
%(with respect to a basis of $\mathrm{GF}(q^m)$ over
%$\mathrm{GF}(q)$) of ${\bf b} \in \mathrm{GF}(q^m)^r$ is given by
%${\bf G}^T$.
We now derive a relation between the rank distance between two
matrices and the injection distances between their respective row
and column spaces.

\begin{theorem}\label{th:x_y_E_F}
For all ${\bf X}, {\bf Y} \in \mathrm{GF}(q)^{m \times n}$,
\begin{align*} %\label{eq:X_Y_E_F}
    &\di(R({\bf X}), R({\bf Y})) +
    \di(C({\bf X}), C({\bf Y})) - |\rk({\bf X}) - \rk({\bf
    Y})|\\
    &\leq \dr({\bf X}, {\bf Y})\\
    & \leq \min\{\di(R({\bf X}), R({\bf Y})),
    \di(C({\bf X}), C({\bf Y}))\}\\
    &+ \min\{\rk({\bf X}), \rk({\bf Y})\}.
\end{align*}
\end{theorem}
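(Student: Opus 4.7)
The plan is to set $U=C(\mathbf{X})$, $V=C(\mathbf{Y})$, $E=R(\mathbf{X})$, $F=R(\mathbf{Y})$, $r_X=\rk(\mathbf{X})$, $r_Y=\rk(\mathbf{Y})$, $\mathbf{Z}=\mathbf{X}-\mathbf{Y}$, and to rewrite the two injection distances as $\di(U,V)=\dim(U+V)-\min\{r_X,r_Y\}$ and $\di(E,F)=\dim(E+F)-\min\{r_X,r_Y\}$. Both sides of the claimed inequality are then inequalities relating $\rk(\mathbf{Z})$ to $\dim(U+V)$, $\dim(E+F)$, and $\dim(U\cap V)$, $\dim(E\cap F)$ via $\dim(U+V)=r_X+r_Y-\dim(U\cap V)$ and similarly for $E,F$.

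For the upper bound I would use the elementary containments $C(\mathbf{Z})\subseteq U+V$ and $R(\mathbf{Z})\subseteq E+F$, which give
\[
\rk(\mathbf{Z})\le\min\{\dim(U+V),\dim(E+F)\}=\min\{\di(U,V),\di(E,F)\}+\min\{r_X,r_Y\},
\]
directly yielding the right-hand side of the theorem.

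The main work is the lower bound. My idea is a restriction-of-linear-map argument: view $\mathbf{X},\mathbf{Y},\mathbf{Z}$ as maps $\mathrm{GF}(q)^n\to\mathrm{GF}(q)^m$, and observe that for any $v\in\ker\mathbf{Z}$ one has $\mathbf{X}v=\mathbf{Y}v\in U\cap V$. Thus $\mathbf{X}$ restricted to $\ker\mathbf{Z}$ maps into $U\cap V$, with kernel $\ker\mathbf{X}\cap\ker\mathbf{Z}=\ker\mathbf{X}\cap\ker\mathbf{Y}$. Rank–nullity applied to this restriction gives
\[
\dim\ker\mathbf{Z}\le\dim(\ker\mathbf{X}\cap\ker\mathbf{Y})+\dim(U\cap V).
\]
Now I invoke the standard duality $\ker\mathbf{X}=R(\mathbf{X})^\perp=E^\perp$, $\ker\mathbf{Y}=F^\perp$, so $\ker\mathbf{X}\cap\ker\mathbf{Y}=(E+F)^\perp$ has dimension $n-\dim(E+F)$. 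Substituting and using $n-\dim\ker\mathbf{Z}=\rk(\mathbf{Z})$ yields
\[
\rk(\mathbf{Z})\ge\dim(E+F)-\dim(U\cap V)=r_X+r_Y-\dim(E\cap F)-\dim(U\cap V).
\]
Finally I convert the intersection dimensions back to injection distances via $\dim(E\cap F)=\max\{r_X,r_Y\}-\di(E,F)$ and the analogous identity for $U\cap V$, which collapses to $\rk(\mathbf{Z})\ge\di(E,F)+\di(U,V)-|r_X-r_Y|$.

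The main obstacle is conceptual rather than computational: one must identify the right linear map to restrict and notice that its image lands in $U\cap V$ while its kernel simultaneously captures both $\ker\mathbf{X}$ and $\ker\mathbf{Y}$; once that observation is made, the duality $\ker\mathbf{X}=R(\mathbf{X})^\perp$ bridges column-space data and row-space data and the inequality falls out. A symmetric argument with $\mathbf{Y}|_{\ker\mathbf{Z}}$ works equally well, which is a useful sanity check.
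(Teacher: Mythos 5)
Your proof is correct, and it reaches the theorem by a different route than the paper. The paper first factorizes ${\bf X}={\bf C}^T{\bf R}$, ${\bf Y}={\bf D}^T{\bf S}$ into generator matrices of the column and row spaces, writes ${\bf X}-{\bf Y}=({\bf C}^T\,|\,-{\bf D}^T)({\bf R}^T\,|\,{\bf S}^T)^T$, and then sandwiches $\rk({\bf X}-{\bf Y})$ between the Sylvester law of nullity and the product-rank bound, identifying $\rk({\bf C}^T\,|\,-{\bf D}^T)=\dim(U+V)$ and $\rk({\bf R}^T\,|\,{\bf S}^T)=\dim(E+F)$. You instead work directly with ${\bf Z}={\bf X}-{\bf Y}$ as a linear map: the upper bound follows from the containments $C({\bf Z})\subseteq U+V$ and $R({\bf Z})\subseteq E+F$, and the lower bound from rank--nullity applied to ${\bf X}|_{\ker{\bf Z}}$, whose image lies in $U\cap V$ and whose kernel is $\ker{\bf X}\cap\ker{\bf Y}=(E+F)^\perp$. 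Both arguments pass through the same intermediate inequality $\dim(U+V)+\dim(E+F)-\rk({\bf X})-\rk({\bf Y})\leq\rk({\bf Z})\leq\min\{\dim(U+V),\dim(E+F)\}$, so the mathematical content is equivalent; indeed your kernel-restriction step is essentially a from-scratch proof of the Sylvester inequality in this special case. What your version buys is self-containment (no rank-factorization lemma, no citation for Sylvester's law) at the cost of invoking the duality $\ker{\bf X}=R({\bf X})^\perp$ and the identities $\dim W^\perp=n-\dim W$ and $(E+F)^\perp=E^\perp\cap F^\perp$, which do hold over finite fields with the standard bilinear form (note that only the dimension formula is needed, not $W\oplus W^\perp=\mathrm{GF}(q)^n$, which can fail); the paper's version is more symmetric in ${\bf X}$ and ${\bf Y}$ and reuses the factorization lemma it needs elsewhere anyway. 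All the individual steps you state, including the conversions $\dim(E\cap F)=\max\{\rk({\bf X}),\rk({\bf Y})\}-\di(E,F)$, check out.
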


\begin{proof}
By Lemma~\ref{lemma:x_Gamma_Delta}, we have ${\bf X} = {\bf C}^T
{\bf R}$ and ${\bf Y} = {\bf D}^T {\bf S}$, where ${\bf C} \in
\mathrm{GF}(q)^{\rk({\bf X}) \times m}$, ${\bf R} \in
\mathrm{GF}(q)^{\rk({\bf X}) \times n}$, ${\bf D} \in
\mathrm{GF}(q)^{\rk({\bf Y}) \times m}$, ${\bf S} \in
\mathrm{GF}(q)^{\rk({\bf Y}) \times n}$ are generator matrices of
$C({\bf X})$, $R({\bf X})$, $C({\bf Y})$, and $R({\bf Y})$,
respectively. Hence ${\bf X} - {\bf Y} = ({\bf C}^T | -{\bf D}^T)
({\bf R}^T | {\bf S}^T)^T$ and $\rk({\bf X} - {\bf Y}) \leq
\min\{ \rk({\bf C}^T | -{\bf D}^T), \rk({\bf R}^T | {\bf S}^T) \}$. Sylvester's law of nullity in \cite[Corollary 6.1]{marsaglia_lma74}
or in \cite[0.4.5 (c)]{horn_book85}, states that $\rk({\bf A}{\bf B}) \geq \rk({\bf A}) + \rk({\bf B}) - n$ for any matrices ${\bf A}$ with $n$ columns and ${\bf B}$ with $n$ rows. Therefore,
\begin{align*}%\nonumber %\label{eq:X_Y1}
    &\rk({\bf C}^T | -{\bf D}^T) + \rk({\bf R}^T | {\bf S}^T) - \rk({\bf
    X}) - \rk({\bf Y})\\
    &\leq \rk({\bf X} - {\bf Y})\\
    &\leq \min\{ \rk({\bf C}^T | -{\bf D}^T), \rk({\bf R}^T | {\bf S}^T) \}.
\end{align*}
Since $\rk({\bf C}^T | -{\bf D}^T) = \di(C({\bf X}), C({\bf Y})) +
\min\{\rk({\bf X}), \rk({\bf Y})\}$ and $\rk({\bf R}^T | {\bf S}^T)
= \di(R({\bf X}), R({\bf Y})) + \min\{\rk({\bf X}), \rk({\bf Y})\}$,
we obtain the claim.
\end{proof}

A {\em constant-rank code} (CRC) of constant rank $r$ in
$\mathrm{GF}(q)^{m \times n}$ is a nonempty subset of
$\mathrm{GF}(q)^{m \times n}$ such that all elements have rank $r$.
Proposition~\ref{prop:C_R(C)_C(C)} below shows how a CRC leads to
two CDCs with their minimum injection distance related to the
minimum rank distance of the CRC.

\begin{proposition}\label{prop:C_R(C)_C(C)}
Let $\mathcal{C}$ be a CRC of constant rank $r$ and minimum distance
$\dr$ in $\mathrm{GF}(q)^{m \times n}$. Then $R(\mathcal{C})
\subseteq E_r(q,n)$ and $C(\mathcal{C}) \subseteq E_r(q,m)$ have
minimum distances at least $\dr-r$.
\end{proposition}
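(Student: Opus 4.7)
The plan is to obtain the claim as an immediate corollary of Theorem~\ref{th:x_y_E_F}, which already relates the rank distance between two matrices to the injection distances between their row and column spaces. Since every codeword in $\mathcal{C}$ has rank exactly $r$, both $|\rk(\mathbf{X}) - \rk(\mathbf{Y})|$ and $\min\{\rk(\mathbf{X}),\rk(\mathbf{Y})\}$ simplify drastically, turning the two-sided bound of the theorem into a clean lower bound on the injection distance of the image subspaces.

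First I would pick two distinct subspaces $U \neq V$ in $R(\mathcal{C})$ and choose preimages $\mathbf{X}, \mathbf{Y} \in \mathcal{C}$ with $R(\mathbf{X})=U$ and $R(\mathbf{Y})=V$. Since $U \neq V$, the matrices $\mathbf{X}$ and $\mathbf{Y}$ are distinct, so $\dr(\mathbf{X},\mathbf{Y}) \geq \dr$. Next I would invoke the upper bound from Theorem~\ref{th:x_y_E_F}, namely
\[
 \dr(\mathbf{X},\mathbf{Y}) \leq \min\{\di(R(\mathbf{X}),R(\mathbf{Y})), \di(C(\mathbf{X}),C(\mathbf{Y}))\} + \min\{\rk(\mathbf{X}),\rk(\mathbf{Y})\},
\]
and use $\rk(\mathbf{X})=\rk(\mathbf{Y})=r$ to rearrange this into
\[
 \di(R(\mathbf{X}),R(\mathbf{Y})) \geq \dr(\mathbf{X},\mathbf{Y}) - r \geq \dr - r.
\]
The identical computation with the column space gives the bound for $C(\mathcal{C})$.

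There is no real obstacle here; the only point requiring a moment of care is that the minimum distance of $R(\mathcal{C})$ is defined over pairs of \emph{distinct} subspaces, so one must observe that distinct subspaces force distinct matrices (hence rank distance at least $\dr$) rather than the other direction. Because this argument uses only the second half of Theorem~\ref{th:x_y_E_F}, the proof should be a few lines.
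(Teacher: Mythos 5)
Your proof is correct and follows exactly the route the paper intends: the paper omits the proof precisely because it "follows directly from Theorem~\ref{th:x_y_E_F}," and your rearrangement of the upper bound with $\rk(\mathbf{X})=\rk(\mathbf{Y})=r$ is the same computation the paper carries out explicitly inside the proof of Proposition~\ref{prop:E(C)}. Your care about distinct subspaces forcing distinct matrices (so that $\dr(\mathbf{X},\mathbf{Y})\geq\dr$) is the right observation and closes the only subtle point.
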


Proposition~\ref{prop:C_R(C)_C(C)} follows directly from
Theorem~\ref{th:x_y_E_F} and hence its proof is omitted. When the minimum rank
distance of a CRC is greater than its constant rank,
Proposition~\ref{prop:E(C)} below shows how the CRC leads to two
CDCs with the \textbf{same cardinality}, and the relations between
their distances can be further strengthened.

\begin{proposition}\label{prop:E(C)}
If $\mathcal{C}$ is a CRC of constant rank $r$ and minimum rank distance
$d+r$ ($1 \leq d \leq r$) in $\mathrm{GF}(q)^{m \times n}$, then
$R(\mathcal{C}) \subseteq E_r(q,n)$ and $C(\mathcal{C}) \subseteq
E_r(q,m)$ have cardinality $|\mathcal{C}|$ and their minimum injection
distances satisfy $\di(C(\mathcal{C})) + \di(R(\mathcal{C})) \leq
d+r \leq \min\{\di(C(\mathcal{C})), \di(R(\mathcal{C}))\} + r$.
\end{proposition}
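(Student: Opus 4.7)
The plan is to derive both parts by specializing Theorem~\ref{th:x_y_E_F} to the constant-rank setting, where for any two codewords ${\bf X}, {\bf Y} \in \mathcal{C}$ we have $\rk({\bf X}) = \rk({\bf Y}) = r$, so $|\rk({\bf X}) - \rk({\bf Y})| = 0$ and $\min\{\rk({\bf X}), \rk({\bf Y})\} = r$. Thus Theorem~\ref{th:x_y_E_F} simplifies to
\begin{equation*}
\di(R({\bf X}), R({\bf Y})) + \di(C({\bf X}), C({\bf Y})) \;\leq\; \dr({\bf X}, {\bf Y}) \;\leq\; \min\{\di(R({\bf X}), R({\bf Y})),\, \di(C({\bf X}), C({\bf Y}))\} + r.
\end{equation*}

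First I would establish the cardinality claim and the right inequality simultaneously. For any distinct ${\bf X}, {\bf Y} \in \mathcal{C}$, the hypothesis gives $\dr({\bf X}, {\bf Y}) \geq d + r$. Combined with the upper bound above, this yields $\min\{\di(R({\bf X}), R({\bf Y})), \di(C({\bf X}), C({\bf Y}))\} \geq d \geq 1$. In particular, both $R({\bf X}) \neq R({\bf Y})$ and $C({\bf X}) \neq C({\bf Y})$, so the maps $R$ and $C$ are injective on $\mathcal{C}$, proving $|R(\mathcal{C})| = |C(\mathcal{C})| = |\mathcal{C}|$. Taking the minimum over all distinct codeword pairs gives $\di(R(\mathcal{C})) \geq d$ and $\di(C(\mathcal{C})) \geq d$, and hence $\min\{\di(C(\mathcal{C})), \di(R(\mathcal{C}))\} + r \geq d + r$, which is the right inequality.

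Next I would establish the left inequality by selecting a particular pair. Assuming $|\mathcal{C}| \geq 2$ (the case $|\mathcal{C}| = 1$ makes the inequality vacuous), let ${\bf X}_0, {\bf Y}_0 \in \mathcal{C}$ be any pair achieving $\dr({\bf X}_0, {\bf Y}_0) = d + r$. Applying the lower bound in the simplified form of Theorem~\ref{th:x_y_E_F} to this pair gives $\di(R({\bf X}_0), R({\bf Y}_0)) + \di(C({\bf X}_0), C({\bf Y}_0)) \leq d + r$. Since $\di(R(\mathcal{C})) \leq \di(R({\bf X}_0), R({\bf Y}_0))$ and $\di(C(\mathcal{C})) \leq \di(C({\bf X}_0), C({\bf Y}_0))$ by definition of minimum distance, the desired bound $\di(C(\mathcal{C})) + \di(R(\mathcal{C})) \leq d + r$ follows immediately.

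There is really no serious obstacle here: the proof is essentially a direct specialization of Theorem~\ref{th:x_y_E_F} to codewords of common rank $r$. The only subtle point worth stating explicitly is that the hypothesis $d \geq 1$ is what guarantees injectivity of the row- and column-space maps, and this is precisely where the assumption $\dr > r$ (rather than just $\dr \geq r$) on the CRC is used to strengthen Proposition~\ref{prop:C_R(C)_C(C)} into the sharper cardinality and distance statements of Proposition~\ref{prop:E(C)}.
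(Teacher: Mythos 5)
Your proof is correct and follows essentially the same route as the paper: both specialize Theorem~\ref{th:x_y_E_F} to codewords of common rank $r$, use the upper bound to get $\di(R(\mathcal{C})), \di(C(\mathcal{C})) \geq d > 0$ (hence injectivity and the cardinality claim), and apply the lower bound to a pair attaining $\dr({\bf X},{\bf Y}) = d+r$ for the left inequality. Your write-up is just slightly more explicit about the simplification of Theorem~\ref{th:x_y_E_F} and the degenerate case $|\mathcal{C}|=1$.
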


\begin{proof}
Let ${\bf X}$ and ${\bf Y}$ be \textbf{any} two distinct codewords
in $\mathcal{C}$. By Theorem~\ref{th:x_y_E_F}, $\di(R({\bf X}),
R({\bf Y})) \geq \dr({\bf X}, {\bf Y}) - r \geq d
> 0$, and hence $\di(R(\mathcal{C})) \geq d$ and $|R(\mathcal{C})| = |\mathcal{C}|$.
Similarly, $\di(C({\bf X}), C({\bf Y})) \geq d
> 0$, and thus $\di(C(\mathcal{C})) \geq d$ and $|C(\mathcal{C})| =
|\mathcal{C}|$. Furthermore, if $\dr({\bf X}, {\bf Y}) = d+r$, then
by Theorem~\ref{th:x_y_E_F}, $d+r \geq \di(C({\bf X}),C({\bf Y})) +
\di(R({\bf X}), R({\bf Y})) \geq \di(C(\mathcal{C})) +
\di(R(\mathcal{C}))$.
\end{proof}

We remark that the requirement of having a minimum distance greater than the constant rank is a strong condition on the CRC. Indeed, any codeword of a linear code has rank at least equal to the minimum distance of a code. Therefore, no set of codewords of a linear code (and, in particular, a linear MRD code) satisfies this condition. Therefore, while CRCs with minimum distance no more than their constant-rank will be directly constructed from linear MRD codes in Section \ref{sec:constructive_bounds}, designing CRCs with minimum distance greater than their constant-rank will require translates of codes instead, which are not as easy to manipulate.

Propositions \ref{prop:C_R(C)_C(C)} and \ref{prop:E(C)} show how to
construct CDCs from a CRC. Alternatively, Proposition
\ref{prop:C_Gamma_Delta} below shows that we can construct a CRC
from a pair of CDCs.

\begin{proposition}\label{prop:C_Gamma_Delta}
Let $\mathcal{M}$ be a CDC in $E_r(q,m)$ and $\mathcal{N}$ be a CDC
in $E_r(q,n)$ such that $|\mathcal{M}| = |\mathcal{N}|$. Then there
exists a CRC $\mathcal{C} \subseteq \mathrm{GF}(q)^{m \times n}$
with constant rank $r$ and cardinality $|\mathcal{M}|$ satisfying
$C(\mathcal{C}) = \mathcal{M}$ and $R(\mathcal{C}) = \mathcal{N}$.
Furthermore, its minimum distance $\dr$ satisfies $\di(\mathcal{N})
+ \di(\mathcal{M}) \leq \dr \leq \min\{\di(\mathcal{N}),
\di(\mathcal{M})\} + r$.
\end{proposition}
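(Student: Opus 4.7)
The plan is to construct $\mathcal{C}$ explicitly by pairing up codewords of $\mathcal{M}$ and $\mathcal{N}$ via an arbitrary bijection, then realize each pair as the column/row space data of a rank-$r$ matrix using the factorization from Lemma \ref{lemma:x_Gamma_Delta}. Concretely, since $|\mathcal{M}| = |\mathcal{N}|$, fix any bijection $\phi : \mathcal{M} \to \mathcal{N}$. For every $U \in \mathcal{M}$, choose a generator matrix $\mathbf{G}_U \in \mathrm{GF}(q)^{r \times m}$ of $U$ and a generator matrix $\mathbf{H}_{\phi(U)} \in \mathrm{GF}(q)^{r \times n}$ of $\phi(U)$, and set $\mathbf{X}_U \df \mathbf{G}_U^T \mathbf{H}_{\phi(U)}$. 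By Lemma \ref{lemma:x_Gamma_Delta}, $\mathbf{X}_U$ has rank $r$ with $C(\mathbf{X}_U) = U$ and $R(\mathbf{X}_U) = \phi(U)$. Let $\mathcal{C} \df \{\mathbf{X}_U : U \in \mathcal{M}\}$.

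Next I would verify the structural claims. Because $U \mapsto C(\mathbf{X}_U) = U$ is injective, distinct choices of $U$ produce distinct codewords, so $|\mathcal{C}| = |\mathcal{M}|$. By construction, $C(\mathcal{C}) = \mathcal{M}$ and $R(\mathcal{C}) = \phi(\mathcal{M}) = \mathcal{N}$. Moreover, if $\mathbf{X}_U \neq \mathbf{X}_{U'}$ then $U \neq U'$, and since $\phi$ is a bijection, also $\phi(U) \neq \phi(U')$; thus distinct codewords of $\mathcal{C}$ have distinct column spaces \emph{and} distinct row spaces.

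For the distance bounds, I would apply Theorem \ref{th:x_y_E_F} to any two distinct codewords $\mathbf{X}, \mathbf{Y} \in \mathcal{C}$, using that both have rank $r$ so the $|\rk(\mathbf{X}) - \rk(\mathbf{Y})|$ term vanishes and $\min\{\rk(\mathbf{X}), \rk(\mathbf{Y})\} = r$. The lower half of that theorem gives
\[
\dr(\mathbf{X}, \mathbf{Y}) \geq \di(R(\mathbf{X}), R(\mathbf{Y})) + \di(C(\mathbf{X}), C(\mathbf{Y})) \geq \di(\mathcal{N}) + \di(\mathcal{M}),
\]
where the second inequality uses that the row and column spaces are distinct elements of $\mathcal{N}$ and $\mathcal{M}$ respectively, so each distance is bounded below by the minimum distance of the corresponding CDC. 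Taking the minimum over all distinct pairs yields $\dr \geq \di(\mathcal{M}) + \di(\mathcal{N})$. For the upper bound, pick a pair $U_1, U_2 \in \mathcal{M}$ achieving $\di(U_1, U_2) = \di(\mathcal{M})$; applying the upper half of Theorem \ref{th:x_y_E_F} to $\mathbf{X}_{U_1}, \mathbf{X}_{U_2}$ gives $\dr \leq \dr(\mathbf{X}_{U_1}, \mathbf{X}_{U_2}) \leq \di(\mathcal{M}) + r$, and doing the same with a minimum-distance pair in $\mathcal{N}$ (pulled back through $\phi^{-1}$) gives $\dr \leq \di(\mathcal{N}) + r$, whence $\dr \leq \min\{\di(\mathcal{M}), \di(\mathcal{N})\} + r$.

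The construction itself is immediate from Lemma \ref{lemma:x_Gamma_Delta}, so the only mildly delicate point is the distance argument — specifically, recognizing that the freedom to choose \emph{which} pair of codewords to plug into Theorem \ref{th:x_y_E_F} is what converts a pointwise inequality into the global bound on $\dr$, and that the bijectivity of $\phi$ is what guarantees the row spaces (respectively column spaces) of distinct codewords stay distinct, so that $\di(\mathcal{M})$ and $\di(\mathcal{N})$ are legitimate lower bounds in the sum bound.
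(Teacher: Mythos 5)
Your proposal is correct and follows essentially the same route as the paper: the paper also builds $\mathcal{C}$ from the rank factorizations ${\bf X}_i = {\bf G}_i^T {\bf H}_i$ (the bijection you call $\phi$ is implicit in its common indexing of $\mathcal{M}$ and $\mathcal{N}$), derives the lower bound from Theorem~\ref{th:x_y_E_F}, and gets the upper bound by applying that theorem to minimum-distance pairs in $\mathcal{M}$ and in $\mathcal{N}$. Your write-up merely makes explicit a detail the paper leaves tacit, namely that bijectivity keeps the row spaces (resp.\ column spaces) of distinct codewords distinct so that $\di(\mathcal{M})$ and $\di(\mathcal{N})$ are valid lower bounds in the sum.
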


\begin{proof}
Denote the generator matrices of the component subspaces of
$\mathcal{M}$ and $\mathcal{N}$ as ${\bf G}_i$ and ${\bf H}_i$,
respectively and define the code $\mathcal{C}$ formed by the
codewords ${\bf X}_i = {\bf G}_i^T {\bf H}_i$ for $0 \leq i \leq
|\mathcal{M}|-1$. Then $C(\mathcal{C}) = \mathcal{M}$ and
$R(\mathcal{C}) = \mathcal{N}$ by Lemma~\ref{lemma:x_Gamma_Delta}
and the lower bound on $\dr$ follows from Theorem~\ref{th:x_y_E_F}. Let
${\bf X}_i$ and ${\bf X}_j$ be distinct codewords in $\mathcal{C}$
such that $\di(C({\bf X}_i), C({\bf X}_j)) = \di(\mathcal{M})$. By
Theorem~\ref{th:x_y_E_F}, we obtain $\dr \leq \dr({\bf X}_i, {\bf
X}_j) \leq \di(\mathcal{M}) + r$. Similarly, we also obtain $\dr
\leq \di(\mathcal{N}) + r$.
\end{proof}

The connections between general CRCs and CDCs derived above
naturally imply relations between optimal CRCs and optimal CDCs. We
denote the maximum cardinality of a CRC in $\mathrm{GF}(q)^{m \times
n}$ with constant rank $r$ and minimum rank distance $d$ as
$\Ar(q,m,n,d,r)$. If $\mathcal{C}$ is a CRC in $\mathrm{GF}(q)^{m
\times n}$ with constant rank $r$, then its transpose code
$\mathcal{C}^T$ forms a CRC in $\mathrm{GF}(q)^{n \times m}$ with
the same constant rank, minimum distance, and cardinality. Therefore
$\Ar(q,m,n,d,r) = \Ar(q,n,m,d,r)$, and henceforth in this paper we
assume $n \leq m$ without loss of generality. We further observe
that $\Ar(q,m,n,d,r)$ is a non-decreasing function of $m$ and $n$,
and a non-increasing function of $d$,  and that $\Ac(q,n,r,d)$ is a
non-decreasing function of $n$ and a non-increasing function of $d$.

\begin{proposition}\label{prop:A_As}
For all $q$, $1 \leq d \leq r \leq n \leq m$, and any $0 \leq p \leq
r$,
\begin{align}
	\nonumber 
    &\min\{\Ac(q,n,r,d+p), \Ac(q,m,r,r-p)\}\\
    \label{eq:A_As1}
    &\leq \Ar(q,m,n,d+r,r) \leq \Ac(q,n,r,d).
\end{align}
\end{proposition}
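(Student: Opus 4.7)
The plan is to prove the two inequalities separately, each reducing almost immediately to the connection propositions already established.

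For the upper bound $\Ar(q,m,n,d+r,r) \le \Ac(q,n,r,d)$, I would take an optimal CRC $\mathcal{C}$ of constant rank $r$ and minimum rank distance $d+r$. Since $d \ge 1$, the hypothesis of Proposition~\ref{prop:E(C)} is met (with its parameter ``$d$'' set equal to our $d$). The right-hand inequality there, $d+r \le \min\{\di(C(\mathcal{C})), \di(R(\mathcal{C}))\} + r$, forces $\di(R(\mathcal{C})) \ge d$; combining with $|R(\mathcal{C})| = |\mathcal{C}|$ from the same proposition gives $|\mathcal{C}| = |R(\mathcal{C})| \le \Ac(q,n,r,d)$, which is exactly what we want.

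For the lower bound, I would fix any $0 \le p \le r$ and set $L = \min\{\Ac(q,n,r,d+p), \Ac(q,m,r,r-p)\}$. Pick an optimal CDC in $E_r(q,n)$ achieving $\Ac(q,n,r,d+p)$ and an optimal CDC in $E_r(q,m)$ achieving $\Ac(q,m,r,r-p)$, and truncate each down to exactly $L$ codewords. This produces $\mathcal{N} \subseteq E_r(q,n)$ with $|\mathcal{N}| = L$ and $\di(\mathcal{N}) \ge d+p$, and $\mathcal{M} \subseteq E_r(q,m)$ with $|\mathcal{M}| = L$ and $\di(\mathcal{M}) \ge r-p$. Now apply Proposition~\ref{prop:C_Gamma_Delta} to the pair $(\mathcal{M}, \mathcal{N})$: its output is a CRC in $\mathrm{GF}(q)^{m \times n}$ of constant rank $r$, cardinality $L$, and minimum rank distance at least $\di(\mathcal{M}) + \di(\mathcal{N}) \ge (r-p) + (d+p) = d+r$. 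Hence $L \le \Ar(q,m,n,d+r,r)$, which is the claim.

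The whole argument is essentially a bookkeeping exercise on top of Propositions~\ref{prop:E(C)} and~\ref{prop:C_Gamma_Delta}; there is no genuine obstacle. The elegant observation powering the lower bound is that the parameter $p$ cancels in the sum $(r-p)+(d+p)=d+r$, so the distance budget between the row-space CDC and the column-space CDC can be allocated arbitrarily without changing the resulting rank distance, and the freedom to choose $p$ is precisely what the minimum in the statement exploits. The only minor point requiring care is that the two CDCs must share a common cardinality $L$ so that Proposition~\ref{prop:C_Gamma_Delta} applies; this is handled by truncation, since removing codewords from a CDC cannot decrease its minimum injection distance. Edge values of $p$ (where $r-p=0$ or $d+p>r$) cause no trouble either: the corresponding $\Ac$ quantity simply becomes trivial and the bound reduces to a vacuous one.
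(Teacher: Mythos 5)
Your proof is correct and follows essentially the same route as the paper's: the upper bound is read off from Proposition~\ref{prop:E(C)} applied to an optimal CRC, and the lower bound from Proposition~\ref{prop:C_Gamma_Delta} with $\di(\mathcal{M})=r-p$ and $\di(\mathcal{N})=d+p$. Your explicit truncation of the two CDCs to a common cardinality (and the observation that the parameter $p$ cancels in the sum of the two injection distances) is precisely the bookkeeping the paper compresses into its appeal to the monotone properties of $\Ar$ and $\Ac$.
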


\begin{proof}
Using the monotone properties of $\Ar(q,m,n,\dr,r)$ and
$\Ac(q,n,r,d)$ above, the upper bound follows from
Proposition~\ref{prop:E(C)}, while the lower bound follows from
Proposition~\ref{prop:C_Gamma_Delta} for $\di(\mathcal{M}) = r-p$
and $\di(\mathcal{N}) = d+p$.
\end{proof}

We remark that the lower bound in (\ref{eq:A_As1}) is trivial for
$d+p > \min\{r, n-r\}$ or $r-p > \min\{r, m-r\}$. Therefore, the
lower bound in (\ref{eq:A_As1}) is nontrivial when $\max\{0, 2r-m\}
\leq p \leq \min\{r-d, n-r-d\}$.

Combining the bounds in (\ref{eq:A_As1}), we obtain that the
cardinalities of optimal CRCs over matrices with sufficiently many rows
equal the cardinalities of CDCs with related distances.
Furthermore, we show that optimal CDCs can be constructed from such
optimal CRCs.

\begin{theorem}\label{th:A=As}
For all $q$, $2r \leq n \leq m$, and $1 \leq d \leq r$,
$\Ar(q,m,n,d+r,r) = \Ac(q,n,r,d)$ if either $d=r$ or $m \geq m_0$,
where $m_0 = (n-r)(r-d+1) + r + 1$. Furthermore, if $\mathcal{C}$ is
an optimal CRC in $\mathrm{GF}(q)^{m \times n}$ with constant rank $r$ and minimum distance $d+r$ for $m \geq m_0$ or $d=r$, then
$R(\mathcal{C})$ is an optimal CDC in $E_r(q,n)$ with minimum
distance $d$.
\end{theorem}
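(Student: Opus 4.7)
The plan is to sandwich $\Ar(q,m,n,d+r,r)$ between two values of $\Ac$ using Proposition~\ref{prop:A_As} at $p=0$, which supplies
\begin{equation*}
\min\{\Ac(q,n,r,d),\Ac(q,m,r,r)\}\le\Ar(q,m,n,d+r,r)\le\Ac(q,n,r,d).
\end{equation*}
The equality therefore collapses onto the single comparison $\Ac(q,m,r,r)\ge\Ac(q,n,r,d)$, which I would verify separately in each of the two stated regimes.

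When $d=r$ this is immediate: $\Ac(q,\cdot,r,r)$ is monotone in the ambient dimension, so $m\ge n$ gives $\Ac(q,m,r,r)\ge\Ac(q,n,r,r)=\Ac(q,n,r,d)$. When $d<r$ I would invoke the lifted-MRD lower bound $\Ac(q,m,r,r)\ge q^{m-r}$ and the upper bound $\Ac(q,n,r,d)\le{n\brack r-d+1}/{r\brack r-d+1}$, both from (\ref{eq:bounds_Ac}), and use (\ref{eq:Gaussian}) to estimate the latter by $K_q^{-1}q^{(n-r)(r-d+1)}$. Since $m\ge m_0=(n-r)(r-d+1)+r+1$ delivers $q^{m-r}\ge q\cdot q^{(n-r)(r-d+1)}$, the inequality reduces to $q\ge K_q^{-1}$.

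The main obstacle is precisely this last estimate in low characteristic: since $K_q^{-1}\le K_2^{-1}<4$, the exponential approximation leaves no slack at $q=2$, so I expect one cannot lean on the uniform constant $K_q^{-1}$ but must instead work directly with the Gaussian-binomial ratio $\prod_{i=0}^{r-d}(q^{n-i}-1)/(q^{r-i}-1)$ and use the explicit denominator factors $(q^{r-i}-1)$ to absorb the gap and bound it by $q^{(n-r)(r-d+1)+1}$ termwise.

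For the second claim, let $\mathcal{C}$ be an optimal CRC, so that $|\mathcal{C}|=\Ar(q,m,n,d+r,r)=\Ac(q,n,r,d)$ by the equality just established. Because the minimum rank distance $d+r$ of $\mathcal{C}$ strictly exceeds its constant rank $r$, Proposition~\ref{prop:E(C)} applies and yields $|R(\mathcal{C})|=|\mathcal{C}|$ together with $\di(R(\mathcal{C}))\ge d$; hence $R(\mathcal{C})\subseteq E_r(q,n)$ is a CDC of cardinality $\Ac(q,n,r,d)$ with minimum injection distance at least $d$, i.e., an optimal CDC with minimum distance $d$.
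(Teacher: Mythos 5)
Your proposal is correct and follows essentially the same route as the paper's proof: Proposition~\ref{prop:A_As} at $p=0$, the reduction to the single comparison $\Ac(q,m,r,r)\ge\Ac(q,n,r,d)$, the lower bound $\Ac(q,m,r,r)\ge q^{m-r}\ge q^{m_0-r}$, and Proposition~\ref{prop:E(C)} for the second claim. The refinement you anticipate for small $q$ is exactly what the paper uses --- it avoids the crude $K_q^{-1}$ estimate by invoking $\alpha(r,r-d+1)>q^{r(r-d+1)-1}$ from \cite[Lemma~1]{gadouleau_it08_dep}, which is precisely your termwise product bound $\prod_{j=d}^{r}(1-q^{-j})^{-1}<q$; note that this step (in the paper as well) requires $d\ge 2$, the case $d=1$ being settled separately by Corollary~\ref{cor:A_r+1} as remarked after the theorem.
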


\begin{proof}
First, the case where $d=r$ directly follows from (\ref{eq:A_As1}) for
$p=0$. Second, if $d < r$ and $m \geq m_0$, by~(\ref{eq:bounds_Ac})
we obtain $\Ac(q,m,r,r) \geq q^{m-r} \geq q^{m_0-r}$. Also, by
\cite[Lemma 1]{gadouleau_it08_dep}, we obtain $q^{r(r-d+1)-1} <
\alpha(r,r-d+1) \leq q^{r(r-d+1)}$ for all $2 \leq d < r$, and
hence~(\ref{eq:bounds_Ac}) yields $\Ac(q,n,r,d) < q^{(n-r)(r-d+1) +
1} = q^{m_0-r} \leq \Ac(q,m,r,r)$. Thus, when $p=0$, the lower bound
in~(\ref{eq:A_As1}) simplifies to $\Ar(q,m,n,d+r,r) \geq
\Ac(q,n,r,d)$. Combining with the upper bound in~(\ref{eq:A_As1}),
we obtain $\Ar(q,m,n,d+r,r) = \Ac(q,n,r,d)$.

The second claim immediately follows from Proposition~\ref{prop:E(C)}.
\end{proof}
Theorem~\ref{th:A=As} implies that to determine $\Ac(q,n,r,d)$ and
to construct optimal CDCs, it is sufficient to determine
$\Ar(q,m,n,d+r,r)$ and to construct optimal CRCs over matrices with
sufficiently many rows. We observe that this implies that
$\Ar(q,m,n,d+r,r)$ remains constant for all $m \geq m_0$. When
$d=r$, $\Ar(q,m,n,2r,r)$ remains constant for $m \geq n$. When
$d=1$, $m_0 = (n-r+1)r + 1$, but $\Ar(q,m,n,r+1,r)$ remains constant
for $m \geq n$, and this is shown in
Section~\ref{sec:constructive_bounds}.

In comparison to existing constructions of CDCs \cite{koetter_it08, silva_it08,
skachek_it10, etzion_isit08, xia_dcc09, kohnert_mmics08}, our construction based on CRCs has two advantages.
\textbf{First} and foremost, by Theorem~\ref{th:A=As}, our construction leads to optimal
CDCs for all parameter values.
In contrast, none of previously proposed constructions lead to optimal CDCs for all parameter values. For example, the construction based on liftings of rank metric codes \cite{koetter_it08,
silva_it08} leads to suboptimal CDCs (though sometimes they may be nearly optimal). This is because CDCs of dimension $r$ based on liftings of rank metric codes have the highest possible covering radius $r$
\cite{gadouleau_it09_cdc}, which implies there exists a subspace that can be added to such CDCs
without decreasing the minimum distance.
%Thus liftings of rank metric codes, although they produce nearly optimal CDCs, are not optimal CDCs in any case.
The CDCs constructed in similar approaches \cite{skachek_it10} are not optimal for the same reason. The optimality for some constructions \cite{etzion_isit08, etzion_it09} are not clear.  The
construction based on Steiner structures \cite{xia_dcc09} and that based on computational techniques \cite{kohnert_mmics08} lead to optimal CDCs, but are applicable to special cases only.
%Fifth,
%\cite{kohnert_mmics08} uses computational techniques to prove the
%existence of good CDCs, which is clearly unsuitable for large
%parameter values. Therefore, none of the constructions previously
%proposed leads to optimal CDCs for all parameter values.
The \textbf{second} advantage of our construction is
an additional degree of freedom, which is the number $m$ of rows of
the matrices. By Theorem \ref{th:A=As}, optimal CRCs lead to optimal
CDCs provided that $m \geq m_0$, and hence the parameter $m$ may vary
anywhere above the lower bound $m_0$. On the other hand, the
constructions in the literature use fixed dimensions and do not
introduce any new parameter. For instance, in order to obtain a CDC
in $E_r(q,n)$ by lifting a rank metric code, the original code must
be in $\mathrm{GF}(q)^{r \times (n-r)}$. This additional degree of
freedom is significant for code design, as it may be easier to
construct optimal CRCs with larger $m$. Thus our construction is a very promising approach to solving the two open research problems mentioned in Section~\ref{sec:intro}.

\section{Constant-rank codes}\label{sec:constant_rank}

Having proved that optimal CRCs over matrices with sufficiently many
rows lead to optimal CDCs, in this section we investigate the
properties of CRCs.

\subsection{Bounds}

We now derive bounds on the maximum cardinality of CRCs. We first
remark that the bounds on $\Ar(q,m,n,d,r)$ derived in
Section~\ref{sec:connection} can be used in this section. Also,
since $\Ar(q,m,n,1,r) = \Nr(q,m,n,r)$ and $\Ar(q,m,n,d,r) = 1$ for
$d > 2r$, we shall assume $2 \leq d \leq 2r$ henceforth\footnote{Since the minimum distance of a code is defined using pairs of distinct codewords, the minimum distance for a code of cardinality one is defined to be zero sometimes.}.

We first derive the counterparts of the Gilbert and the Hamming
bounds for CRCs in terms of intersections of spheres with rank
radii.

%\begin{proposition}\label{prop:gilbert_hamming}
%For all $q$, $1 \leq r,d \leq n \leq m$, and $t \df \lfloor
%\frac{d-1}{2} \rfloor$,
%\begin{equation}\label{eq:gilbert_hamming}
%    \frac{\Nr(q,m,n,r)}{\sum_{i=0}^{d-1} \Jr(q,m,n,i,r,r)} \leq \Ar(q,m,n,d,r)
%    \leq \frac{\Nr(q,m,n,r)}{\sum_{i=0}^t \Jr(q,m,n,i,r,r)}.
%\end{equation}
%\end{proposition}
%
%The proof is straightforward and hence omitted. The Hamming bound is
%generalized as follows.

\begin{proposition}\label{prop:hamming_general}
For all $q$, $1 \leq r,d \leq n \leq m$, and $t =\lfloor
\frac{d-1}{2} \rfloor$,
\begin{align*}%\label{eq:hamming_general}
    &\frac{\Nr(q,m,n,r)}{\sum_{i=0}^{d-1} \Jr(q,m,n,i,r,r)}\\
    &\leq \Ar(q,m,n,d,r)\\ 
    &\leq \min_{1 \leq s \leq n} \left\{
    \frac{\Nr(q,m,n,s)}{\sum_{i=0}^t \Jr(q,m,n,i,s,r)} \right\}.
\end{align*}
\end{proposition}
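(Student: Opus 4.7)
Both inequalities are rank-metric, constant-rank analogues of the classical sphere-covering (Gilbert) and sphere-packing (Hamming) bounds. The plan is to mirror the standard Hamming-space arguments, relying on one key technical observation: the set of rank-$s$ matrices in $\gf(q)^{m\times n}$ is precisely the rank sphere of radius $s$ centered at the zero matrix $\mathbf{0}$. Consequently, for a fixed rank-$r$ matrix $X$, the number of rank-$s$ matrices at rank distance exactly $i$ from $X$ equals the cardinality of the intersection of the sphere of radius $i$ around $X$ with the sphere of radius $s$ around $\mathbf{0}$; since $\dr(X,\mathbf{0})=r$, this count is $\Jr(q,m,n,i,s,r)$ by the definition recalled in Section~\ref{sec:rank_metric}.

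For the lower (Gilbert-type) bound I will run a greedy maximality argument. Let $\mathcal{C}$ be a CRC achieving the maximum cardinality $\Ar(q,m,n,d,r)$. By maximality, every rank-$r$ matrix $Y\in\gf(q)^{m\times n}$ must lie within rank distance $d-1$ of some codeword; otherwise $\mathcal{C}\cup\{Y\}$ would still have constant rank $r$ and minimum rank distance $d$, contradicting optimality. Applying the observation above with $s=r$, each codeword covers at most $\sum_{i=0}^{d-1}\Jr(q,m,n,i,r,r)$ rank-$r$ matrices, so a union bound gives
\[
\Nr(q,m,n,r) \leq \Ar(q,m,n,d,r)\sum_{i=0}^{d-1}\Jr(q,m,n,i,r,r),
\]
which rearranges to the claimed lower bound.

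For the upper (Hamming-type) bound I will use the standard disjoint-spheres packing argument, but restricted to an arbitrary rank-$s$ shell. Fix $s$ with $1\leq s\leq n$. Around each codeword $X\in\mathcal{C}$, consider the rank-$s$ matrices at rank distance at most $t$ from $X$; by the key observation, this set has cardinality $\sum_{i=0}^{t}\Jr(q,m,n,i,s,r)$. Because $d\geq 2t+1$, the triangle inequality makes these sets pairwise disjoint subsets of the rank-$s$ shell, whose total cardinality is $\Nr(q,m,n,s)$. Dividing and taking the minimum over $s$ yields the stated upper bound.

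I do not expect any step to be genuinely difficult. The only subtle point is the identification of the shell counts with $\Jr(q,m,n,i,s,r)$, which reinterprets the constant-rank constraint ``$\rk=s$'' as the rank-distance constraint ``distance $s$ from $\mathbf{0}$'' and then invokes the definition of $\Jr$; those values of $s$ for which the inner sum vanishes simply contribute trivially to the minimization.
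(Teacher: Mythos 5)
Your proposal is correct and follows essentially the same route as the paper: the upper bound is the identical shell-restricted packing argument (disjoint sets of rank-$s$ matrices within distance $t$ of each codeword, counted via $\Jr$), and your greedy maximality argument for the lower bound is exactly the "straightforward" Gilbert-type proof the paper omits.
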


\begin{proof}
The proof of the lower bound is straightforward and hence omitted.
Let $\mathcal{C} = \{{\bf c}_k\}_{k=0}^{K-1}$ be a CRC with
constant rank $r$ and minimum distance $d$ in $\mathrm{GF}(q)^{m
\times n}$. For all $0 \leq k \leq K-1$ and $0 \leq s \leq n-1$, if
we denote the set of matrices in $\mathrm{GF}(q)^{m \times n}$ with
rank $s$ and distance $\leq t$ from ${\bf c}_k$ as $R_{k,s}$, then
$|R_{k,s}| = \sum_{i=0}^t \Jr(i,s,r)$ for all $k$. Clearly $R_{k,s}
\cap R_{l,s} = \emptyset$ for all $k \neq l$, and hence $\Nr(s) \geq
|\bigcup_{k=0}^{K-1} R_{k,s}| = K |R_{k,s}|$, which yields the upper
bound.
\end{proof}

We now derive upper bounds on $\Ar(q,m,n,d,r)$. We begin by proving
the counterpart in rank metric codes of a well-known bound on
constant-weight codes proved by Johnson in \cite{johnson_it62}.

\begin{proposition}[Johnson bound for rank metric codes]\label{prop:johnson}
For all $q$, $1 \leq r,d < n \leq m$, $\Ar(q,m,n,d,r) \leq
\frac{q^n-1}{q^{n-r}-1} \Ar(q,m,n-1,d,r)$.
\end{proposition}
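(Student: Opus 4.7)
The plan is to adapt Johnson's classical argument for constant Hamming weight codes, replacing the role of individual coordinate positions by hyperplanes (i.e., $(n-1)$-dimensional subspaces) of $\mathrm{GF}(q)^n$. Let $\mathcal{C} \subseteq \mathrm{GF}(q)^{m \times n}$ be a CRC with constant rank $r$ and minimum rank distance $d$, and for each hyperplane $V \subseteq \mathrm{GF}(q)^n$ introduce the subcode $\mathcal{C}_V \df \{{\bf X} \in \mathcal{C} : R({\bf X}) \subseteq V\}$. The strategy is to show that $|\mathcal{C}_V| \leq \Ar(q,m,n-1,d,r)$ and then double-count the incidences $({\bf X}, V)$ with $R({\bf X}) \subseteq V$.

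To bound $|\mathcal{C}_V|$, I would fix a generator matrix ${\bf B} \in \mathrm{GF}(q)^{(n-1) \times n}$ of $V$. Each ${\bf X} \in \mathcal{C}_V$ admits a unique factorization ${\bf X} = {\bf X}'{\bf B}$ with ${\bf X}' \in \mathrm{GF}(q)^{m \times (n-1)}$, since the rows of ${\bf X}$ lie in $V$ and ${\bf B}$ has full row rank. Because the map ${\bf Y} \mapsto {\bf Y}{\bf B}$ is an injection of row spaces, right-multiplication by ${\bf B}$ is rank-preserving, so $\rk({\bf X}') = r$ and $\rk({\bf X}' - {\bf Y}') = \rk({\bf X} - {\bf Y}) \geq d$ for any distinct ${\bf X},{\bf Y} \in \mathcal{C}_V$. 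Hence $\{{\bf X}' : {\bf X} \in \mathcal{C}_V\}$ is a CRC in $\mathrm{GF}(q)^{m \times (n-1)}$ with constant rank $r$ and minimum distance at least $d$, yielding the desired bound $|\mathcal{C}_V| \leq \Ar(q,m,n-1,d,r)$.

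For the counting step, the number of hyperplanes of $\mathrm{GF}(q)^n$ containing a fixed $r$-dimensional subspace equals the number of one-dimensional subspaces of the $(n-r)$-dimensional quotient, namely ${n-r \brack 1} = (q^{n-r}-1)/(q-1)$; the total number of hyperplanes is ${n \brack 1} = (q^n-1)/(q-1)$. Summing over all hyperplanes $V$ therefore gives
\[
\sum_V |\mathcal{C}_V| \;=\; |\mathcal{C}| \cdot \frac{q^{n-r}-1}{q-1} \;\leq\; \frac{q^n-1}{q-1} \cdot \Ar(q,m,n-1,d,r),
\]
from which the claim drops out after dividing by $(q^{n-r}-1)/(q-1)$. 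The only non-routine ingredient is the rank-preservation identity $\rk({\bf X}'{\bf B}) = \rk({\bf X}')$, which is immediate from the full row rank of ${\bf B}$; I do not anticipate any genuine obstacle beyond this bookkeeping with Gaussian binomials.
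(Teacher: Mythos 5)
Your proposal is correct and follows essentially the same route as the paper: both arguments double-count incidences between codewords and hyperplanes containing their row spaces (each row space lying in ${n-r \brack 1}$ of the ${n \brack 1}$ hyperplanes), and both reduce the subcode supported on a hyperplane $V$ to a CRC in $\mathrm{GF}(q)^{m \times (n-1)}$ via the rank- and distance-preserving factorization ${\bf X} = {\bf X}'{\bf B}$ through a generator matrix ${\bf B}$ of $V$. The only cosmetic difference is that you bound every subcode $\mathcal{C}_V$ and sum, whereas the paper averages first to select a single best hyperplane; the resulting inequality is identical.
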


\begin{proof}
Let $\mathcal{C}$ be an optimal CRC in $\mathrm{GF}(q)^{m \times n}$
with constant rank $r$ and minimum distance $d$. For all ${\bf C}
\in \mathcal{C}$ and all $V \in E_{n-1}(q,n)$, we define $f(V,{\bf
C}) = 1$ if $R({\bf C}) \subseteq V$ and $f(V, {\bf C}) = 0$
otherwise. For any ${\bf C}$, the row space of ${\bf C}$ is
contained in ${n-r \brack 1}$ subspaces in $E_{n-1}(q,n)$ and hence
$\sum_{V \in E_{n-1}(q,n)} f(V, {\bf C}) = {n-r \brack 1}$; for all
$V$, $\sum_{{\bf C} \in \mathcal{C}} f(V,{\bf C}) = |\{ {\bf C} \in
C : R({\bf C}) \subseteq V \}|$. Summing over all possible pairs, we
obtain
\begin{align*}
    &\sum_{V \in E_{n-1}(q,n)} \sum_{{\bf C} \in \mathcal{C}} f(V,{\bf
    C})\\
    &= \sum_{V \in E_{n-1}(q,n)} |\{ {\bf C} \in C : R({\bf C}) \subseteq V \}|,\\
    &\sum_{{\bf C} \in \mathcal{C}} \sum_{V \in E_{n-1}(q,n)}  f(V,{\bf
    C})= {n-r \brack 1} \Ar(q,m,n,d,r).
\end{align*}
Hence there exists $U \in E_{n-1}(q,n)$ such that $|\{ {\bf C} \in C
: R({\bf C}) \subseteq U \}| = \sum_{{\bf C} \in \mathcal{C}}
f(U,{\bf C}) \geq \frac{{n-r \brack 1}}{{n \brack 1}}
\Ar(q,m,n,d,r)$. By Lemma~\ref{lemma:x_Gamma_Delta}, all the
codewords ${\bf C}_i$ with $R({\bf C}_i) \subseteq U$ can be
expressed as ${\bf C}_i = {\bf G}_i^T {\bf H}_i {\bf U}$, where
${\bf H}_i \in \mathrm{GF}(q)^{r \times (n-1)}$ and ${\bf U} \in
\mathrm{GF}(q)^{(n-1) \times n}$ is a generator matrix of $U$.
Therefore, the code $\{ {\bf G}_i^T {\bf H}_i \}$ forms a CRC in
$\mathrm{GF}(q)^{m \times (n-1)}$ with constant rank $r$, minimum
distance $d$, and cardinality $|\{ {\bf C} \in C : R({\bf C})
\subseteq U \}|$, and hence $\frac{q^{n-r} - 1}{q^n-1}
\Ar(q,m,n,d,r) \leq |\{ {\bf C} \in C : R({\bf C}) \subseteq U \}|
\leq \Ar(q,m,n-1,d,r)$.
\end{proof}

The Singleton bound for rank metric codes yields upper bounds on
$\Ar(q,m,n,d,r)$. For any $I \subseteq \{0, 1, \ldots, n\}$, let
$\Ar(q,m,n,d,I)$ denote the maximum cardinality of a code in
$\mathrm{GF}(q)^{m \times n}$ with minimum rank distance $d$ such
that all codewords have ranks belonging to $I$. Then $\Ar(q,m,n,d,r)
\leq q^{m(n-d+1)} - \Ar(q,m,n,d,P_r)$, where $P_r \df \{ i \,:\, 0
\leq i \leq n, |i-r| \geq d \}$. We now determine the counterpart of
the Singleton bound for CRCs.

\begin{proposition}[Singleton bound for CRCs]\label{prop:singleton}
For all $0 \leq i \leq \min\{d-1,r\}$, $\Ar(q,m,n,d,r) \leq
\Ar(q,m,n-i,d-i,J_i)$, where $J_i =\{r-i, r-i+1,
\ldots, \min\{n-i,r\} \}$.
%\begin{eqnarray}
%    \label{eq:A_singleton1}
%    \Ar(q,m,n,d,r) &\leq& \Ar(q,m,n-i,d-i,J_i).
%%    \label{eq:A_singleton2}
%%    &\leq& \sum_{j=r-i}^{\min\{n-i,r\}} \Ar(q,m,n-i,d-i,j).
%\end{eqnarray}
\end{proposition}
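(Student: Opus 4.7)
The plan is to mimic the classical proof of the Singleton bound, but using puncturing on the column side so that the constant-rank property is replaced by the (weaker) condition that all codeword ranks lie in a specified set $J_i$. Since we are working with matrices in $\mathrm{GF}(q)^{m \times n}$ under the rank metric, ``puncturing'' here means deleting $i$ columns from every codeword. Fix $0 \leq i \leq \min\{d-1, r\}$ and let $\mathcal{C}$ be an optimal CRC in $\mathrm{GF}(q)^{m \times n}$ with constant rank $r$ and minimum rank distance $d$, so $|\mathcal{C}| = \Ar(q,m,n,d,r)$. Form $\mathcal{C}'$ by deleting the same (arbitrary) $i$ columns from each codeword of $\mathcal{C}$, obtaining a code in $\mathrm{GF}(q)^{m \times (n-i)}$.

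First I would show that puncturing preserves the cardinality and yields a code of minimum rank distance at least $d-i$. For any distinct ${\bf X}, {\bf Y} \in \mathcal{C}$ with punctured images ${\bf X}', {\bf Y}'$, the identity $\rk({\bf X}' - {\bf Y}') \geq \rk({\bf X} - {\bf Y}) - i \geq d - i \geq 1$ (using $i \leq d-1$) gives both distinctness and the distance bound, since removing $i$ columns can decrease the rank by at most $i$.

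Next I would analyze the rank spectrum of $\mathcal{C}'$. For any codeword ${\bf X} \in \mathcal{C}$ of rank $r$ with punctured image ${\bf X}' \in \mathrm{GF}(q)^{m \times (n-i)}$, removing $i$ columns gives $\rk({\bf X}') \geq r - i$; on the other hand, $\rk({\bf X}') \leq \rk({\bf X}) = r$ and $\rk({\bf X}') \leq n-i$, so $\rk({\bf X}') \leq \min\{r, n-i\}$. Hence the rank of every codeword of $\mathcal{C}'$ belongs to $J_i = \{r-i, r-i+1, \ldots, \min\{n-i, r\}\}$.

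Combining these two observations, $\mathcal{C}'$ is a code in $\mathrm{GF}(q)^{m \times (n-i)}$ with minimum rank distance at least $d-i$ whose codewords all have ranks in $J_i$, and $|\mathcal{C}'| = |\mathcal{C}|$. By definition of $\Ar(q,m,n-i,d-i,J_i)$, this yields $\Ar(q,m,n,d,r) = |\mathcal{C}| \leq \Ar(q,m,n-i,d-i,J_i)$, as claimed. There is no real obstacle here: the argument is entirely analogous to puncturing for the classical Singleton bound, and the only subtlety is keeping careful track of the allowable rank values after puncturing, which is exactly what the set $J_i$ records.
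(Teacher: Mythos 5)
Your proof is correct and follows essentially the same route as the paper's: puncture $i$ columns of an optimal CRC, observe that ranks drop by at most $i$ (giving distinctness, minimum distance at least $d-i$, and codeword ranks in $J_i$), and conclude by the definition of $\Ar(q,m,n-i,d-i,J_i)$. Your write-up merely spells out the rank inequalities slightly more explicitly than the paper does.
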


\begin{proof}
Let $\mathcal{C}$ be an optimal CRC in $\mathrm{GF}(q)^{m \times n}$
with constant rank $r$ and minimum distance $d$, and consider the
code $\mathcal{C}_i$ obtained by puncturing $i$ coordinates of the
codewords in $\mathcal{C}$. Since $i \leq r$, the codewords of
$\mathcal{C}_i$ all have ranks between $r-i$ and $\min\{n-i,r\}$.
Also, since $i < d$, any two codewords have distinct puncturings,
and we obtain $|\mathcal{C}_i| = |\mathcal{C}|$ and
$\dr(\mathcal{C}_i) \geq d-i$. Hence $\Ar(q,m,n,d,r) = |\mathcal{C}|
= |\mathcal{C}_i| \leq \Ar(q,m,n-i,d-i,J_i)$.
%(\ref{eq:A_singleton2}) directly follows a union bound on $\Ar(q,m,n-i,d-i,J_i)$.
\end{proof}

We now combine the counterpart of the Johnson bound in
Proposition~\ref{prop:johnson} and that of the Singleton bound in
Proposition~\ref{prop:singleton} in order to obtain an upper bound
on $\Ar(q,m,n,d,r)$ for $d \leq r$.

\begin{proposition}\label{prop:singleton_johnson}
For all $q$, $1 \leq d \leq r \leq n \leq m$, $\Ar(q,m,n,d,r) \leq
{n \brack r} \alpha(m,r-d+1)$.
\end{proposition}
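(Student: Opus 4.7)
The plan is to combine the two preceding bounds by iterating the Johnson-type bound of Proposition~\ref{prop:johnson} in the $n$-direction down to the corner case $n=r$, and then closing with a single Singleton-type puncturing step.

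First, I would apply Proposition~\ref{prop:johnson} exactly $n-r$ times. Each application strips one column and contributes a factor $\frac{q^{n-i}-1}{q^{n-i-r}-1}$, giving
\begin{equation*}
\Ar(q,m,n,d,r) \leq \left(\prod_{i=0}^{n-r-1}\frac{q^{n-i}-1}{q^{n-i-r}-1}\right)\Ar(q,m,r,d,r).
\end{equation*}
Each invocation is legal because one needs $r<n-i$ (the stopping criterion of the iteration) and $d<n-i$, and the latter follows from $d\leq r$. If $n=r$ this step is vacuous.

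Second, I would recognize that the product telescopes to the Gaussian binomial ${n \brack r}$. Indeed, using the standard symmetry ${n \brack r}={n \brack n-r}$ together with $\alpha(\ell,k)=q^{k(k-1)/2}\prod_{i=0}^{k-1}(q^{\ell-i}-1)$, both numerators and denominators of the form $q^{k(k-1)/2}$ cancel, leaving exactly the product above. So the problem reduces to showing $\Ar(q,m,r,d,r)\leq \alpha(m,r-d+1)$.

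Third, I would apply Proposition~\ref{prop:singleton} to the reduced code on $\mathrm{GF}(q)^{m\times r}$ with $i=d-1$ (which satisfies $i\leq \min\{d-1,r\}$ since $d\leq r$). In this situation the set $J_{d-1}=\{r-d+1,\ldots,\min\{r-d+1,r\}\}$ collapses to the singleton $\{r-d+1\}$, so
\begin{equation*}
\Ar(q,m,r,d,r) \leq \Ar(q,m,r-d+1,1,\{r-d+1\}).
\end{equation*}
The right-hand side is a code on $\mathrm{GF}(q)^{m\times(r-d+1)}$ with minimum rank distance $1$ in which every codeword has full column rank, so the distance constraint is vacuous and the bound equals the total count of such matrices, namely $\Nr(q,m,r-d+1,r-d+1)={r-d+1\brack r-d+1}\alpha(m,r-d+1)=\alpha(m,r-d+1)$. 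Chaining the three bounds yields the claim.

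There is no serious obstacle here; the only genuine bookkeeping is the Gaussian-binomial telescoping in the second step, and one should verify the boundary cases $n=r$ and $d=r$ separately: in both the iteration and puncturing degenerate correctly, the iteration becoming empty when $n=r$ and the Singleton index $i=d-1=r-1$ remaining admissible when $d=r$.
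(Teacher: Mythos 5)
Your proposal is correct and follows exactly the paper's own argument: iterate the Johnson bound of Proposition~\ref{prop:johnson} a total of $n-r$ times to reduce to $\Ar(q,m,r,d,r)$ with the accumulated factor telescoping to ${n \brack r}$, then apply Proposition~\ref{prop:singleton} with $i=d-1$ so that $J_{d-1}=\{r-d+1\}$ and the bound becomes $\Nr(q,m,r-d+1,r-d+1)=\alpha(m,r-d+1)$. Your additional verification of the telescoping identity and the boundary cases $n=r$ and $d=r$ is sound but only makes explicit what the paper leaves implicit.
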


\begin{proof}
Applying Proposition~\ref{prop:johnson} $n-r$ times successively, we
obtain $\Ar(q,m,n,d,r) \leq {n \brack r} \Ar(q,m,r,d,r)$. For $n=r$
and $i = d-1$, $J_i = \{r-d+1\}$ and hence Proposition
\ref{prop:singleton} yields $\Ar(q,m,r,d,r) \leq
\Ar(q,m,r-d+1,1,r-d+1) = \Nr(q,m,r-d+1,r-d+1) = \alpha(m,r-d+1)$.
Thus $\Ar(q,m,n,d,r) \leq {n \brack r} \alpha(m,r-d+1)$.
\end{proof}

We now derive the counterpart in rank metric codes of the
Bassalygo-Elias bound \cite{bassalygo_pit65} and we also tighten the
bound when $d > r+1$. For a code $\mathcal{C} \subseteq \mathrm{GF}(q)^{l \times k}$ ($k \leq l $), $A_i \df |\{ {\bf C} \in \mathcal{C}: \rk({\bf C}) = i \}|$ for $0\leq i \leq l$; we refer to $A_i$'s as the rank distribution of $\mathcal{C}$.

\begin{proposition}[Bassalygo-Elias bound for rank metric codes]\label{prop:bassalygo_s}
For $\max\{r,d\} \leq k \leq n$, $0 \leq s \leq k$, $k \leq l \leq
m$, and any code $\mathcal{C} \subseteq \mathrm{GF}(q)^{l \times k}$
with minimum rank distance $d$ and rank distribution $A_i$'s,
%$A_i \df |\{ {\bf C} \in \mathcal{C}: \rk({\bf C}) = i \}|$,
\begin{equation}\label{eq:bassalygo_s}
    \Ar(q,m,n,d,r) \geq \max_{s, \left\{A_i\right\}, k, l}\frac{\sum_{i=0}^n A_i
    \Jr(q,l,k,s,r,i)}{\Nr(q,l,k,s)}.
\end{equation}
Furthermore, if $r+1< d \leq 2r$, then
\begin{align}
	\nonumber
    &\Ar(q,m,n,d,r)\\
    \label{eq:bassalygo_extended_s}
    &\geq \max_{s, \left\{A_i\right\}, k, l} \frac{\sum_{i=0}^n A_i
    \Jr(q,l,k,s,r,i)}{\Nr(q,l,k,s) - \sum_{i=0}^n A_i \sum_{t=0}^{d-r-1}
    \Jr(q,l,k,s,t,i)}.
\end{align}
\end{proposition}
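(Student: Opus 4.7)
The plan is a translate-and-restrict argument in the spirit of the classical Elias construction. The key observation is that for any ${\bf Y} \in \mathrm{GF}(q)^{l \times k}$ of rank $s$, the translate $\mathcal{C} + {\bf Y}$ still has minimum rank distance $d$, so the subset $\{{\bf C} + {\bf Y} : {\bf C} \in \mathcal{C},\ \rk({\bf C}+{\bf Y}) = r\}$ forms a CRC with constant rank $r$ and minimum rank distance $d$. Since $l \leq m$ and $k \leq n$, it embeds into $\mathrm{GF}(q)^{m \times n}$ by zero-padding, so it suffices to exhibit a single ${\bf Y}$ of rank $s$ for which this subset is large.

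For the first bound~(\ref{eq:bassalygo_s}), I would double-count the triples $({\bf C}, {\bf Y})$ with ${\bf C} \in \mathcal{C}$, $\rk({\bf Y}) = s$, and $\rk({\bf C}+{\bf Y}) = r$. Grouping by ${\bf C}$, a codeword of rank $i$ contributes exactly $\Jr(q,l,k,s,r,i)$ such ${\bf Y}$'s by definition of the sphere-intersection number, and summing over ranks yields $\sum_i A_i \Jr(q,l,k,s,r,i)$. Grouping instead by ${\bf Y}$ and applying pigeonhole over the $\Nr(q,l,k,s)$ rank-$s$ matrices selects some ${\bf Y}$ achieving the stated ratio, and maximizing over the free parameters $s$, $k$, $l$, and $\{A_i\}$ completes this part.

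For the strengthened bound~(\ref{eq:bassalygo_extended_s}), the idea is that when $r+1 < d \leq 2r$, a large fraction of the ${\bf Y}$'s in the denominator are wasted because they cannot yield any rank-$r$ codeword. Call ${\bf Y}$ \emph{bad} if some ${\bf C}^\ast \in \mathcal{C}$ satisfies $\rk({\bf C}^\ast + {\bf Y}) \leq d-r-1$. I would first establish uniqueness: if ${\bf C}^\ast$ and ${\bf C}^{\ast\ast}$ both qualified, the triangle inequality would give $\rk({\bf C}^\ast - {\bf C}^{\ast\ast}) \leq 2(d-r-1) < d$ (using $d \leq 2r$), contradicting the minimum distance of $\mathcal{C}$. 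Second, I would show that a bad ${\bf Y}$ contributes zero rank-$r$ codewords: the distinguished ${\bf C}^\ast$ has $\rk({\bf C}^\ast + {\bf Y}) \leq d-r-1 < r$, while any other ${\bf C}$ satisfies $\rk({\bf C}+{\bf Y}) \geq \rk({\bf C}-{\bf C}^\ast) - \rk({\bf C}^\ast + {\bf Y}) \geq d - (d-r-1) = r+1 > r$.

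With these two facts, the number of bad ${\bf Y}$'s equals the pair-count $\sum_i A_i \sum_{t=0}^{d-r-1} \Jr(q,l,k,s,t,i)$ (uniqueness turns the pair-count into a set-count), and the entire contribution to the numerator concentrates on the good ${\bf Y}$'s. Averaging only over good ${\bf Y}$'s then produces the tighter denominator. I expect the main obstacle to be the two-step argument about bad ${\bf Y}$'s, specifically the interplay between the conditions $d \leq 2r$ (needed so ${\bf C}^\ast$ itself does not have rank $r$ and so uniqueness holds) and $d > r+1$ (needed so that the subtracted sum is nontrivial); everything else is the same routine double-counting used for the first bound.
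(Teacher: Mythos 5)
Your proposal is correct and follows essentially the same route as the paper's proof: an Elias-type double count of pairs (codeword, rank-$s$ matrix) at rank distance $r$, followed for the strengthened bound by excluding the rank-$s$ matrices within distance $d-r-1$ of some codeword, with the same triangle-inequality arguments showing these contribute nothing and are counted without overlap. The only cosmetic differences are that you phrase the conclusion as a pigeonhole choice of a single good translate ${\bf Y}$ rather than summing the per-${\bf X}$ inequality, and you use ${\bf C}+{\bf Y}$ where the paper uses $\mathcal{C}-{\bf X}$.
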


The proof of Proposition~\ref{prop:bassalygo_s} is given in
Appendix~\ref{app:prop:bassalygo_s}.

Although the RHS of (\ref{eq:bassalygo_s})
and~(\ref{eq:bassalygo_extended_s}) can be maximized over $\{A_i\}$,
it is difficult to do so since $\{A_i\}$ is not available for most
rank metric codes with the exception of linear MRD codes. Thus, we
derive a bound using the rank weight distribution of linear MRD
codes.

\begin{corollary}\label{cor:bassalygo}
For all $q$, $1 \leq r,d \leq n \leq m$, $\Ar(q,m,n,d,r) \geq
\Nr(q,m,n,r) q^{m(-d+1)}$.
\end{corollary}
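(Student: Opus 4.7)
The plan is to specialize Proposition~\ref{prop:bassalygo_s} to an $(n, n{-}d{+}1, d)$ linear MRD code $\mathcal{C} \subseteq \gf(q)^{m \times n}$ (so $k = n$, $l = m$), whose cardinality satisfies $|\mathcal{C}| = \sum_i A_i = q^{m(n-d+1)}$. For each $s \in \{0, 1, \ldots, n\}$, inequality~(\ref{eq:bassalygo_s}) gives $\Ar(q,m,n,d,r) \geq X(s) \df \sum_i A_i \Jr(q,m,n,s,r,i)/\Nr(q,m,n,s)$. Since the max over $s$ is at least any nonnegatively-weighted average, I would weight $X(s)$ by $\Nr(q,m,n,s)$ and sum over $s$; the resulting weighted average of the $X(s)$ remains a lower bound on $\Ar(q,m,n,d,r)$.

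The numerator of this weighted average is
\begin{align*}
\sum_{s=0}^{n} \Nr(q,m,n,s)\, X(s) &= \sum_{s,i} A_i \Jr(q,m,n,s,r,i)\\
&= \sum_i A_i \sum_{s=0}^n \Jr(q,m,n,s,r,i).
\end{align*}
The key identity I would use is $\sum_{s=0}^{n} \Jr(q,m,n,s,r,i) = \Nr(q,m,n,r)$, independent of $i$: this partitions the sphere of rank radius $r$ about any rank-$i$ center according to the ranks of its points, so the sum equals the total number of matrices at rank distance exactly $r$ from the center, which is $\Nr(q,m,n,r)$ regardless of the center. The denominator is simply $\sum_{s=0}^n \Nr(q,m,n,s) = q^{mn}$. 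Combining, the weighted average equals $|\mathcal{C}|\,\Nr(q,m,n,r)/q^{mn} = q^{m(n-d+1)}\Nr(q,m,n,r)/q^{mn} = \Nr(q,m,n,r)\, q^{m(-d+1)}$, which is exactly the claimed lower bound.

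In essence this is an averaging/pigeonhole argument: among the $q^{mn}$ translates of the MRD code $\mathcal{C}$, the expected number of rank-$r$ matrices in a translate is $|\mathcal{C}|\,\Nr(q,m,n,r)/q^{mn}$, so some translate contains at least that many rank-$r$ matrices, and such a translate forms a CRC of constant rank $r$ and minimum rank distance at least $d$. There is no significant obstacle in the proof; the only point requiring any attention is recognizing that $\sum_s \Jr(q,m,n,s,r,i)$ is independent of $i$, reflecting the fact that every sphere of rank radius $r$ in $\gf(q)^{m \times n}$ contains exactly $\Nr(q,m,n,r)$ points.
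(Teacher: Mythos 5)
Your proposal is correct and is essentially the paper's own proof: both apply inequality~(\ref{eq:bassalygo_s}) to an $(n,n-d+1,d)$ linear MRD code, sum the resulting inequalities $\Nr(q,m,n,s)\Ar(q,m,n,d,r) \geq \sum_i A_i \Jr(q,m,n,s,r,i)$ over all $s$ (your weighted-average framing is the same computation), and invoke the identity $\sum_{s=0}^n \Jr(q,m,n,s,r,i) = \Nr(q,m,n,r)$ together with $\sum_s \Nr(q,m,n,s) = q^{mn}$. The concluding remark about translates of the MRD code is a pleasant interpretation but does not change the argument.
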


\begin{proof}
Applying (\ref{eq:bassalygo_s}) to an $(n,n-d+1,d)$ MRD code over
$\mathrm{GF}(q^m)$, we obtain $\Nr(s) \Ar(d,r) \geq
\sum_{i=0}^n M(d,i) \Jr(s,r,i)$. Summing for all $0 \leq
s \leq n$, we obtain $\Ar(d,r) \geq \Nr(r) q^{m(-d+1)}$
since $\sum_{s=0}^n \Jr(s,r,i) = \Nr(r)$.
\end{proof}

The RHS of~(\ref{eq:bassalygo_s})
and~(\ref{eq:bassalygo_extended_s}) decrease rapidly with increasing
$d$, rendering the bounds in (\ref{eq:bassalygo_s})
and~(\ref{eq:bassalygo_extended_s}) trivial for $d$ approaching $2r$.

Proposition \ref{prop:bassalygo_vs_johnson} below shows that the
bound in Corollary \ref{cor:bassalygo} is tight up to a scalar for
$d \leq r$. To measure the tightness, we introduce a ratio $C(q,m,n,d,r) \df
\Ar(q,m,n,d,r)/[\Nr(q,m,n,r)q^{m(-d+1)}]$ for $2 \leq d \leq r \leq n \leq m$.

\begin{proposition}\label{prop:bassalygo_vs_johnson}
For all $q$, $2 \leq d \leq r \leq n \leq m$, $C(q,m,n,d,r) \leq
\frac{q^2}{q^2-1}$ for $r+d-1 \leq m$ and $C(q,m,n,d,r) <
\frac{q-1}{q}K_q^{-1}$ otherwise.
\end{proposition}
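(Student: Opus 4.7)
The plan is to combine the Singleton–Johnson upper bound on $\Ar(q,m,n,d,r)$ from Proposition~\ref{prop:singleton_johnson} with the lower bound $\Nr(q,m,n,r)\,q^{m(-d+1)}$ that defines the denominator of $C$. Using $\Nr(q,m,n,r) = {n \brack r}\alpha(m,r)$, the Gaussian binomial cancels, leaving
\begin{equation*}
C(q,m,n,d,r) \leq \frac{\alpha(m,r-d+1)\, q^{m(d-1)}}{\alpha(m,r)}.
\end{equation*}
Since $\alpha(m,r)/\alpha(m,r-d+1) = \prod_{i=r-d+1}^{r-1}(q^m-q^i)$ is a product of $d-1$ factors, pulling $q^m$ out of each factor in the denominator and substituting $k = m-i$ puts the ratio into the compact form
\begin{equation*}
C(q,m,n,d,r) \leq \prod_{k=m-r+1}^{m-r+d-1}\frac{1}{1-q^{-k}}. \qquad (\star)
\end{equation*}
From here the argument splits according to the two cases in the statement.

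In the first case, $r+d-1 \leq m$ forces every index in $(\star)$ to satisfy $k \geq m-r+1 \geq d \geq 2$. I would prove the equivalent assertion $\prod_{k=a}^{a+d-2}(1-q^{-k}) \geq 1-q^{-2}$ for $a \geq d \geq 2$ via the Bernoulli-type inequality $\prod_k(1-x_k) \geq 1-\sum_k x_k$, which reduces the problem to showing $\sum_{k=a}^{a+d-2} q^{-k} \leq q^{-2}$. Bounding the geometric tail by $q^{-a}/(1-q^{-1})$ turns this into the inequality $q^{a-3}(q-1) \geq 1$, which is immediate for $a \geq 3$ and $q \geq 2$. The remaining edge case $d=2$ collapses $(\star)$ to a single factor $1/(1-q^{-a})$, and $a \geq d = 2$ gives the bound directly. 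Together these yield $C \leq 1/(1-q^{-2}) = q^2/(q^2-1)$.

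In the second case, $r+d-1 > m$, the starting index of $(\star)$ drops below $d$, but it still satisfies $m-r+1 \geq 2$; extending the finite product of $d-1$ factors to the infinite product from $k=2$ gives the strict upper bound
\begin{equation*}
C(q,m,n,d,r) < \prod_{k=2}^{\infty}\frac{1}{1-q^{-k}} = \frac{1-q^{-1}}{K_q} = \frac{q-1}{q}K_q^{-1},
\end{equation*}
the last two equalities following from the definition of $K_q$. The main obstacle I expect is handling the boundary values sharply: in the first case the naive geometric-series estimate is exactly tight when $d=2$, forcing the small case-split described above; in the second case one needs to confirm that the starting index of $(\star)$ is at least $2$, i.e.\ reduce the degenerate situation $m=r$ to a case already covered, which is where most of the careful bookkeeping lives.
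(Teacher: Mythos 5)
Your route is essentially the paper's: both start from Proposition~\ref{prop:singleton_johnson} to get $C(q,m,n,d,r)\leq \alpha(m,r-d+1)\,q^{m(d-1)}/\alpha(m,r)$ and then estimate this ratio, which the paper rewrites as $q^{(m-r+d-1)(d-1)}/\alpha(m-r+d-1,d-1)$ and which is exactly your product $(\star)$. The only substantive difference is that the paper disposes of the two cases by citing Lemma~1 of \cite{gadouleau_it08_dep} (namely $\alpha(n,l)\geq\frac{q^2-1}{q^2}q^{nl}$ for $l\leq n-l$, and $\alpha(n,l)>\frac{q}{q-1}K_q q^{nl}$ for $1\leq l\leq n-1$), whereas you reprove these estimates inline via the Weierstrass product inequality and a geometric tail; that part of your argument is correct, including the separate treatment of $d=2$ where the naive tail estimate is not quite strong enough.

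The gap is the one you flag at the end and then leave open: in the second case you need the starting index $m-r+1$ of $(\star)$ to be at least $2$, which fails exactly when $r=n=m$, a situation that always falls in the ``otherwise'' regime since $d\geq 2$. There is no reduction of this case to one ``already covered'': for $r=m$ the bound $(\star)$ becomes $\prod_{k=1}^{d-1}(1-q^{-k})^{-1}$, which already for $q=2$ and any $d\geq 2$ exceeds $\frac{q-1}{q}K_q^{-1}$ (it tends to $K_q^{-1}$ as $d$ grows), so the claimed inequality cannot be extracted from $(\star)$ there at all; one would need a genuinely sharper upper bound on $\Ar(q,m,m,d,m)$ than Proposition~\ref{prop:singleton_johnson} provides. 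To be fair, the paper's own proof has the identical blind spot: it applies the cited estimate with $n=m-r+d-1$ and $l=d-1$, and when $r=m$ this means $l=n$, outside the lemma's stated range $l\leq n-1$ (where the estimate is in fact false for large $n$). So apart from this shared corner case your proof matches the paper's; to close it you would need either to exclude $r=m$ from the statement or to supply a separate argument for that subcase.
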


\begin{proof}
By Proposition~\ref{prop:singleton_johnson}, $C(d,r) \leq
q^{m(d-1)} \alpha(m,r-d+1)/\alpha(m,r) =
q^{(m-r+d-1)(d-1)}/\alpha(m-r+d-1,d-1)$. Since $\alpha(n,l) >
\frac{q}{q-1} K_q q^{nl}$ for all $1 \leq l \leq n-1$ \cite[Lemma
1]{gadouleau_it08_dep}, we obtain $C(d,r) <
\frac{q-1}{q}K_q^{-1}$. Finally, $\alpha(n,l) \geq \frac{q^2-1}{q^2}
q^{nl}$ for $l \leq n-l$ \cite[Lemma 1]{gadouleau_it08_dep} yields
$C(d,r) \leq \frac{q^2}{q^2-1}$ for $r+d-1 \leq m$.
\end{proof}

The proof of Proposition \ref{prop:bassalygo_vs_johnson} indicates
that the upper bound in Proposition \ref{prop:singleton_johnson} is
also tight up to a scalar for $d \leq r$. However, these bounds are
not constructive. Below we derive constructive bounds on
$\Ar(q,m,n,d,r)$.

\subsection{Constructions of CRCs}\label{sec:constructive_bounds}

We first give a construction of asymptotically optimal CRCs when $d
\leq r$. We assume the matrices in $\mathrm{GF}(q)^{m \times n}$ are
mapped into vectors in $\mathrm{GF}(q^m)^n$ according to a fixed
basis $B_m$ of $\mathrm{GF}(q^m)$ over $\mathrm{GF}(q)$.

\begin{proposition}\label{prop:Mdr}
For all $q$, $2 \leq d \leq r \leq n \leq m$, $\Ar(q,m,n,d,r) \geq
M(q,m,n,d,r) > {n \brack r} q^{m(r-d)}$.
\end{proposition}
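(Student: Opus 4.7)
The statement has two parts. For the first inequality $\Ar(q,m,n,d,r) \geq M(q,m,n,d,r)$, I would exhibit an explicit CRC of cardinality $M(q,m,n,d,r)$. Take a linear $(n, n-d+1)$ generalized Gabidulin MRD code $\mathcal{G}$ over $\mathrm{GF}(q^m)$, which exists since $m \geq n$, and let $\mathcal{C} \subseteq \mathcal{G}$ be the subset of codewords of rank exactly $r$. By the very definition of $M(q,m,n,d,r)$ in~(\ref{eq:Mdr_def}), $|\mathcal{C}| = M(q,m,n,d,r)$. Since $\mathcal{G}$ is linear, the difference of any two distinct codewords in $\mathcal{C}$ is a nonzero codeword of $\mathcal{G}$ and therefore has rank at least $d$. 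Hence $\mathcal{C}$ is a CRC with constant rank $r$ and minimum rank distance at least $d$, yielding $\Ar(q,m,n,d,r) \geq |\mathcal{C}| = M(q,m,n,d,r)$.

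For the second inequality $M(q,m,n,d,r) > {n \brack r} q^{m(r-d)}$, I would compute directly with~(\ref{eq:Mdr_def}). Set $f(j) \df {r \brack j} q^{(r-j)(r-j-1)/2}(q^{m(j-d+1)} - 1)$ so that $M / {n \brack r} = \sum_{j=d}^r (-1)^{r-j} f(j)$. The edge case $r = d$ is immediate, since the sum collapses to $q^m - 1$ and $m \geq r \geq 2$, $q \geq 2$ give $q^m - 1 \geq 3 > 1 = q^{m(r-d)}$. For $r > d$ (so $r \geq 3$), the plan is twofold: first show that the leading pair $f(r) - f(r-1)$ already exceeds $q^{m(r-d)}$, and then show that the remaining alternating tail is nonnegative. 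For the first, substituting the closed forms and collecting terms rewrites $f(r) - f(r-1) - q^{m(r-d)}$ as
\[
q^{m(r-d)}\left(q^m - \frac{q^r + q - 2}{q-1}\right) + \frac{q(q^{r-1}-1)}{q-1},
\]
which is strictly positive: the first factor is nonnegative whenever $m \geq r$ and $q \geq 2$ (with equality only at the boundary $q=2,\,m=r$), while the residue $q(q^{r-1}-1)/(q-1)$ is strictly positive since $r \geq 3$. For the second, pairing the remaining terms as $(f(r-2) - f(r-3)) + (f(r-4) - f(r-5)) + \cdots$, with a trailing $+f(d)$ if $r - d$ is even, yields a sum of nonnegative quantities provided that $f$ is strictly increasing on $\{d, \ldots, r\}$. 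Monotonicity is verified by factoring $f(j)/f(j-1)$ into a ratio of Gaussian coefficients, a power of $q$, and a ratio of $(q^{mk}-1)$ terms, and bounding each piece using $m \geq r$ and $q \geq 2$.

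I expect the main obstacle to be the monotonicity check for $f$, since the ratio $f(j)/f(j-1)$ is a product of three $j$-dependent factors whose product must be kept above $1$ uniformly on $\{d+1, \ldots, r\}$. A cleaner alternative I would explore first is to invoke the $q$-binomial identity $\sum_{j=0}^r (-1)^{r-j} {r \brack j} q^{(r-j)(r-j-1)/2} q^{mj} = \alpha(m, r)$ (a rewriting of the $q$-binomial theorem with $x = -q^{-m}$), which expresses $M / {n \brack r}$ as $q^{-m(d-1)}\alpha(m,r)$ plus a compact low-order correction. The two-sided bounds on $\alpha(m, r)$ from~\cite{gadouleau_it08_dep} already cited in the paper would then give the comparison with $q^{m(r-d)}$ in closed form, bypassing the pairing argument.
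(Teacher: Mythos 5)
Your proposal is correct and follows essentially the same route as the paper's proof: the first inequality via the codewords of rank $r$ in a linear MRD code, and the second by writing $M/{n \brack r}$ as the alternating sum of the terms $f(j)$ (the paper's $\mu_j$), invoking their monotonicity to reduce to the leading pair $f(r)-f(r-1)$, and bounding that pair below by $q^{m(r-d)}$. Your explicit algebraic verification of the leading-pair bound and the suggested $q$-binomial shortcut are refinements, not a different argument.
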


\begin{proof}
The codewords of rank $r$ in an $(n, n-d+1,d)$ linear MRD code over
$\mathrm{GF}(q^m)$ form a CRC in $\mathrm{GF}(q)^{m \times n}$ with
constant rank $r$ and minimum distance $d$. Thus, $\Ar(d,r)
\geq M(d,r)$.

We now prove the lower bound on $M(d,r)$. First, for $d=r$,
$M(r,r) = {n \brack r} (q^m-1) > {n \brack r}$. Second,
suppose $d < r$. By~(\ref{eq:Mdr_def}), $M(d,r)$ can be
expressed as $M(d,r) = {n \brack r}\sum_{j=d}^r (-1)^{r-j}
\mu_j$, where $\mu_j \df q^{(r-j)(r-j-1)/2} {r \brack j}
(q^{m(j-d+1)}-1)$. It can be easily shown that $\mu_j > \mu_{j-1}$
for $d+1 \leq j \leq r$, and hence $M(d,r) \geq {n \brack r}
(\mu_r - \mu_{r-1})$. Therefore, $M(d,r) \geq {n \brack r}
[(q^{m(r-d+1)}-1) - {r \brack 1}(q^{m(r-d)}-1)] > {n \brack r}
q^{m(r-d)}.$
\end{proof}

\begin{corollary}\label{cor:A_r}
For all $q$, $1 \leq r \leq n \leq m$, $\Ar(q,m,n,r,r) = {n \brack
r} (q^m-1)$.
\end{corollary}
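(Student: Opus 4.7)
The plan is to prove the corollary by matching the existing lower and upper bounds on $\Ar(q,m,n,r,r)$ that are already in hand, so the argument should reduce to a two-line sandwich and require no new technique.

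First I would invoke Proposition~\ref{prop:Mdr} with $d=r$. Its proof already contains the explicit evaluation $M(q,m,n,r,r) = {n \brack r}(q^m-1)$, obtained by substituting $d=r$ into~(\ref{eq:Mdr_def}) and observing that only the $j=r$ term survives. Thus the codewords of maximal rank in an $(n,1,r)$ linear MRD code over $\gf(q^m)$ already form a CRC witnessing $\Ar(q,m,n,r,r) \geq {n \brack r}(q^m-1)$.

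Next I would invoke Proposition~\ref{prop:singleton_johnson}, again specialized to $d=r$. It says $\Ar(q,m,n,d,r) \leq {n \brack r}\alpha(m,r-d+1)$, so at $d=r$ the right-hand side becomes ${n \brack r}\alpha(m,1) = {n \brack r}(q^m-1)$, giving the reverse inequality.

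Since the two bounds agree, the corollary follows immediately. The only minor point worth spelling out is that the formula for $M(q,m,n,r,r)$ used as the lower bound really does equal ${n \brack r}(q^m-1)$; this is a one-line specialization of~(\ref{eq:Mdr_def}) and is in fact noted verbatim inside the proof of Proposition~\ref{prop:Mdr}. I expect no real obstacle: the work has all been done earlier, and the corollary is essentially a clean assembly of Propositions~\ref{prop:Mdr} and~\ref{prop:singleton_johnson} at the boundary case $d=r$.
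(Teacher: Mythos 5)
Your proposal is correct and matches the paper's own proof essentially verbatim: the upper bound comes from Proposition~\ref{prop:singleton_johnson} at $d=r$ (where $\alpha(m,1)=q^m-1$) and the lower bound from Proposition~\ref{prop:Mdr} via $M(q,m,n,r,r)={n \brack r}(q^m-1)$. Nothing further is needed.
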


\begin{proof}
By Proposition~\ref{prop:singleton_johnson}, $\Ar(r,r) \leq {n
\brack r} (q^m-1)$, and by Proposition~\ref{prop:Mdr},
$\Ar(r,r) \geq M(r,r) = {n \brack r} (q^m-1)$.
\end{proof}

By Corollary~\ref{cor:A_r}, the codewords of rank $r$ in an $(n,
n-r+1,r)$ linear MRD code are \textbf{optimal} CRCs with minimum distance
$r$. Proposition~\ref{prop:bound_Mdr} shows that for all but one
case, the codewords of rank $r$ in an $(n,n-d+1,d)$ MRD code form a
code whose cardinality is close to that of an optimal CRC up to a
scalar which tends to $1$ for large $q$. To measure the optimality, we introduce a ratio
$B(q,m,n,d,r) \df
\Ar(q,m,n,d,r)/M(q,m,n,d,r)$ for  $1 \leq d < r \leq n \leq m$.
\begin{proposition}\label{prop:bound_Mdr}
For all $q$, $1 \leq d < r \leq n \leq m$ and  $m \geq 3$,
\begin{align}
    &B(2,m,m,m-1,m) \leq 2^{m-1}-1 \label{eq:B_1}\\
    &B(q,m,m,m-1,m) < \frac{q-1}{q-2} \quad \mbox{for}\,\, q>2 \label{eq:B_2}\\
    &B(q,m,m,m-2,m) < \frac{(q^2-1)(q-1)}{(q^2-1)(q-2)+1} \label{eq:B_3}\\
	\nonumber
    &B(q,m,m,d,m)\\
    &< \frac{(q^3-1)(q^2-1)(q-1)}{(q^3-1)(q^2-1)(q-2) + q^3-2} \quad \mbox{for}\, d < m-2 \label{eq:B_4}\\
    &B(q,m,n,d,r) < \frac{q}{q-1} \quad \mbox{for}\,\, r<m. \label{eq:B_5}
\end{align}
\end{proposition}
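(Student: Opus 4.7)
The plan is to combine Proposition~\ref{prop:singleton_johnson}, which gives $\Ar(q,m,n,d,r) \leq {n \brack r}\alpha(m,r-d+1)$, with the explicit formula \eqref{eq:Mdr_def} for $M(q,m,n,d,r)$. Setting $k = r-d+1$ and substituting $l = r-j$ in \eqref{eq:Mdr_def} puts the denominator of $B(q,m,n,d,r)$ in the form of an alternating sum
\[ \sum_{l=0}^{k-1}(-1)^l {r \brack l}q^{l(l-1)/2}(q^{m(k-l)}-1). \]
Both $\Ar$ and $M$ carry a common factor of ${n \brack r}$, so the ratio reduces to $\alpha(m,k)$ over this alternating sum. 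The argument then proceeds case by case, keeping the first few leading terms and bounding the tail.

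For (\ref{eq:B_1})--(\ref{eq:B_3}) with $n=r=m$, the exponent $k \in \{2,3\}$ is small enough that the sum collapses to an exact two- or three-term expression. Substituting the Gaussian binomials, the resulting ratio simplifies in closed form. For $k=2$ one gets $(q^m-q)(q-1)/[(q-2)q^m+q]$, which evaluates to exactly $2^{m-1}-1$ at $q=2$ and is strictly less than $(q-1)/(q-2)$ for $q>2$ via the elementary inequality $-q(q-2)<q$. The $k=3$ case follows by an analogous direct computation, keeping all three terms of the sum, to yield (\ref{eq:B_3}).

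For (\ref{eq:B_4}) with $k\geq 4$ (and $n=r=m$), I retain the first four terms of the alternating sum and argue that the remainder $\sum_{l\geq 4}$ is nonnegative by pairing consecutive summands: the ratio of the $(l+1)$st to the $l$th term in magnitude is dominated by $q^{r-m-1+l}/(q^{l+1}-1)$ times lower-order corrections, so each adjacent pair contributes positively. Simplifying the resulting four-term lower bound on the denominator yields the claimed constant. For (\ref{eq:B_5}) one exploits $r<m$ to obtain geometric decay of the summands at ratio at most $1/(q(q-1))$, which justifies the two-term lower bound $(q^{mk}-1) - {r \brack 1}(q^{m(k-1)}-1)$ on the denominator. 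Combined with the trivial estimate $\alpha(m,k) \leq q^{mk}$, this gives the $q/(q-1)$ bound for $q>2$; the boundary case $q=2$ is handled by the sharper estimate $\alpha(m,k) \leq q^{mk}(1-q^{-m})$, which restores the strict inequality.

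The main obstacle lies in verifying nonnegativity of the alternating-sum tails in cases (\ref{eq:B_4}) and (\ref{eq:B_5}) and in controlling the resulting lower bounds with enough precision to match the specific constants asserted, rather than merely their asymptotic $q\to\infty$ limits. In particular, case (\ref{eq:B_5}) demands a careful treatment of the boundary regime $q=2$, where the naive estimate $\alpha(m,k)\leq q^{mk}$ becomes insufficient and a first-order correction is required.
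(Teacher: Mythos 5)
Your proposal is correct and follows essentially the same route as the paper's proof: upper-bound $\Ar$ via Proposition~\ref{prop:singleton_johnson}, expand $M(q,m,n,d,r)$ from \eqref{eq:Mdr_def}, compute the small cases \eqref{eq:B_1}--\eqref{eq:B_3} exactly, and lower-bound the alternating sum by truncating to its leading (four, resp.\ two) terms, which is justified by the monotonicity $\mu_j>\mu_{j-1}$ already established in the proof of Proposition~\ref{prop:Mdr}. The only cosmetic differences are your closed-form ratio for the $k=2$ case and your handling of strictness in \eqref{eq:B_5}, where the paper instead uses the strict bound $\alpha(m,r-d+1)<q^{m(r-d+1)}$ to get $B<(1-q^{r-m})^{-1}\leq q/(q-1)$ uniformly in $q$.
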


The proof of Proposition~\ref{prop:bound_Mdr} is given in
Appendix~\ref{app:prop:bound_Mdr}.

We now construct CRCs for $d > r$ using generalized Gabidulin codes
\cite{kshevetskiy_isit05}. Let ${\bf g} \in \mathrm{GF}(q^m)^n$ have
rank $n$, and for $0 \leq i \leq m-1$, denote the vector in
$\mathrm{GF}(q^m)^n$ obtained by raising each coordinate of ${\bf
g}$ to the $q^{ai}$-th power, ${\bf g}^{[i]}$, where $a$ and $m$
are coprime. Let $\mathcal{C}$ be the $(n,n-d+1,d)$ generalized
Gabidulin code over $\mathrm{GF}(q^m)$ generated by $\left({\bf
g}^{[0]^T}, \,\, {\bf g}^{[1]^T}, \,\, \ldots, \,\, {\bf
g}^{[n-d]^T} \right)^T$, and $\mathcal{C}'$ be the $(n,d-r,n-d+r+1)$
generalized Gabidulin code generated by $\left({\bf g}^{[n-d+1]^T},
\,\, {\bf g}^{[n-d+2]^T}, \,\, \ldots, \,\, {\bf g}^{[n-r]^T}
\right)^T$. We consider the coset $\mathcal{C} + {\bf c}'$, where
${\bf c}' \in \mathcal{C}'$, and we denote the number of codewords
of rank $r$ in $\mathcal{C} + {\bf c}'$ as $\sigma_r({\bf c}')$.

\begin{lemma}\label{lemma:c'}
For all $d > r$, there exists ${\bf c}' \in \mathcal{C}'$ such that
$\sigma_r({\bf c}') \geq {n \brack r} q^{m(r-d+1)}$.
\end{lemma}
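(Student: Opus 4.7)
The plan is to prove this by induction on $d$ (keeping $r$, $m$, $n$, and the vector ${\bf g}$ fixed), using a refined coset-pigeonhole at each step. To make the dependence on $d$ explicit, I write $\mathcal{C}_d$ and $\mathcal{C}'_d$ for the two codes in the statement. First I would set up a common ambient code: concatenating the two generator matrices produces the row list ${\bf g}^{[0]}, \ldots, {\bf g}^{[n-r]}$, so $\mathcal{C}_d \oplus \mathcal{C}'_d$ is an $(n, n-r+1, r)$ generalized Gabidulin code $\mathcal{D}$, which is independent of $d$. Since $\mathcal{C}_d \cap \mathcal{C}'_d = \{{\bf 0}\}$, the cosets $\mathcal{C}_d + {\bf c}'$ partition $\mathcal{D}$ as ${\bf c}'$ ranges over $\mathcal{C}'_d$, so $\sum_{{\bf c}' \in \mathcal{C}'_d} \sigma_r({\bf c}')$ equals the total number of rank-$r$ codewords of $\mathcal{D}$, which by~\eqref{eq:Mdr_def} evaluates to $M(q,m,n,r,r) = {n \brack r}(q^m-1)$.

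For the base case $d = r+1$, observe that $\mathcal{C}_{r+1}$ has minimum rank distance $r+1 > r$, so $\sigma_r({\bf 0}) = 0$. All ${n \brack r}(q^m-1)$ rank-$r$ codewords of $\mathcal{D}$ therefore lie in the $q^m - 1$ nonzero cosets, and pigeonhole extracts some ${\bf c}' \neq {\bf 0}$ with $\sigma_r({\bf c}') \geq {n \brack r} = {n \brack r} q^{m(r-d+1)}$.

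For the inductive step, fix $d \geq r+2$ and assume the claim for $d-1$. The key structural observation is that $\mathcal{C}_d \subset \mathcal{C}_{d-1}$ with index $q^m$: the generator matrix of $\mathcal{C}_{d-1}$ is obtained from that of $\mathcal{C}_d$ by adjoining the single row ${\bf g}^{[n-d+1]}$, and this row moreover lies in $\mathcal{C}'_d$. By the inductive hypothesis, some ${\bf c}'_{d-1} \in \mathcal{C}'_{d-1} \subset \mathcal{C}'_d$ satisfies that the coset $\mathcal{C}_{d-1} + {\bf c}'_{d-1}$ contains at least ${n \brack r} q^{m(r-d+2)}$ rank-$r$ codewords. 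Decomposing
\[
\mathcal{C}_{d-1} + {\bf c}'_{d-1} = \bigsqcup_{\alpha \in \mathrm{GF}(q^m)} \bigl( \mathcal{C}_d + \alpha\, {\bf g}^{[n-d+1]} + {\bf c}'_{d-1} \bigr)
\]
into $q^m$ cosets of $\mathcal{C}_d$, each with shift lying in $\mathcal{C}'_d$, and applying pigeonhole once more produces ${\bf c}'_d \in \mathcal{C}'_d$ with $\sigma_r({\bf c}'_d) \geq {n \brack r} q^{m(r-d+2)}/q^m = {n \brack r} q^{m(r-d+1)}$, completing the induction.

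The main technical obstacle is that a single averaging over all of $\mathcal{C}'_d$ only delivers $\frac{{n \brack r}(q^m-1)}{q^{m(d-r)}}$, which falls just short of the target $\frac{{n \brack r}\, q^m}{q^{m(d-r)}}$; simply discarding the trivial zero coset recovers the bound only when $d = r+1$. The remedy is to shave off this ``$-1$'' at every level of the chain $\mathcal{C}_{r+1} \supset \mathcal{C}_{r+2} \supset \cdots \supset \mathcal{C}_d$ of nested subcodes, which is exactly what the induction accomplishes.
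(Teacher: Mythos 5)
Your proof is correct, and it rests on exactly the same two structural facts as the paper's own argument: that $\mathcal{C}\oplus\mathcal{C}'$ is the $(n,n-r+1,r)$ MRD code containing exactly ${n \brack r}(q^m-1)$ words of rank $r$, and that the $(n,n-r,r+1)$ generalized Gabidulin code generated by ${\bf g}^{[0]},\ldots,{\bf g}^{[n-r-1]}$ has minimum distance $r+1$ and hence contains no word of rank $r$. The difference lies in how the pigeonhole is organized. The paper does it in one shot: writing ${\bf c}'=\sum_{i=n-d+1}^{n-r}c_i{\bf g}^{[i]}$, it observes that $\sigma_r({\bf c}')=0$ whenever $c_{n-r}=0$, because those cosets sit inside the $(n,n-r,r+1)$ code, and then distributes the ${n \brack r}(q^m-1)$ rank-$r$ words over the remaining $q^{m(d-r-1)}(q^m-1)$ cosets, which yields exactly ${n \brack r}q^{m(r-d+1)}$ (phrased as a proof by contradiction). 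Your induction spreads the same exclusion along the chain $\mathcal{C}_{r+1}\supset\cdots\supset\mathcal{C}_d$: the zero coset of $\mathcal{C}_{r+1}$ that you discard in the base case is precisely the union of the cosets of $\mathcal{C}_d$ with $c_{n-r}=0$, and each refinement step contributes a factor $q^m$, so the two arguments account for the same denominator $(q^m-1)q^{m(d-r-1)}$. The paper's version is shorter; yours is a direct (non-contradiction) argument that makes transparent why naive averaging over all of $\mathcal{C}'$ falls short by a factor $(q^m-1)/q^m$ and where that loss is recovered, and it additionally locates the good coset inside a nested chain of good cosets. Neither version is effective at identifying ${\bf c}'$ explicitly, a limitation the paper itself acknowledges after the lemma.
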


\begin{proof}
Any codeword ${\bf c}' \in \mathcal{C}'$ can be expressed as ${\bf
c}' = c_{n-d+1} {\bf g}^{[n-d+1]} + c_{n-d+2} {\bf g}^{[n-d+2]} +
\ldots + c_{n-r} {\bf g}^{[n-r]}$, where $c_i \in \mathrm{GF}(q^m)$
for $n-d+1 \leq i \leq n-r$. If $c_{n-r} = 0$, then $\left(
\mathcal{C} + {\bf c}' \right) \subset \mathcal{D}$, where
$\mathcal{D}$ is the $(n,n-r,r+1)$ generalized Gabidulin code
generated by $\left({\bf g}^{[0]^T}, \,\, {\bf g}^{[1]^T},\,\,
\ldots, \,\, {\bf g}^{[n-r-1]^T} \right)^T$. Therefore
$\sigma_r({\bf c}') = 0$ if $c_{n-r} = 0$.

Denote the number of codewords of rank $r$ in $\mathcal{C} \oplus
\mathcal{C}'$ as $\tau_r$. Since $\bigcup_{{\bf c}' \in
\mathcal{C}'} \left( \mathcal{C} + {\bf c}' \right) = \mathcal{C}
\oplus \mathcal{C}'$, we have $\tau_r = \sum_{{\bf c}' \in
\mathcal{C}'} \sigma_r({\bf c}')$. Also, $\mathcal{C} \oplus
\mathcal{C}'$ forms an $(n,n-r+1,r)$ MRD code, and hence $\tau_r =
M(q,m,n,r,r) = {n \brack r} (q^m-1)$. Suppose that for all ${\bf c}'
\in \mathcal{C}'$, $\sigma_r({\bf c}') < {n \brack r} q^{m(r-d+1)}$.
Then $\tau_r = \sum_{{\bf c}': c_{n-r} \neq 0} \sigma_r({\bf c}') <
{n \brack r} (q^m-1)$, which contradicts $\tau_r = {n \brack r}
(q^m-1)$.
\end{proof}

Although Lemma~\ref{lemma:c'} proves the existence of a vector ${\bf
c}'$ for which the translate $\mathcal{C} + {\bf c}'$ has high
cardinality, it does not indicate how to choose ${\bf c}'$. For $d =
r+1$, it can be shown that all ${\bf c}' \in \mathcal{C}'$ satisfy
the bound, and that they all lead to optimal codes.

\begin{corollary}\label{cor:c'_r+1}
If $d=r+1$, then $\sigma_r({\bf c}') = {n \brack r}$ for all ${\bf
c}' \in \mathcal{C}'$.
\end{corollary}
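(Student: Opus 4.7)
The plan is to combine a scalar-invariance symmetry, which is available precisely because $\mathcal{C}'$ becomes one-dimensional when $d=r+1$, with the total count of rank-$r$ codewords in $\mathcal{C} \oplus \mathcal{C}'$ that was already extracted in the proof of Lemma~\ref{lemma:c'}.

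First I would specialize to $d = r+1$ and observe that $\mathcal{C}'$ has $\mathrm{GF}(q^m)$-dimension $d-r = 1$; thus every ${\bf c}' \in \mathcal{C}'$ has the form $\lambda {\bf g}^{[n-r]}$ for some $\lambda \in \mathrm{GF}(q^m)$. Then, exploiting the linearity of $\mathcal{C}$ over $\mathrm{GF}(q^m)$, for any nonzero $\lambda$ I would write $\mathcal{C} + \lambda {\bf g}^{[n-r]} = \lambda \bigl(\mathcal{C} + {\bf g}^{[n-r]}\bigr)$. Since multiplication by a nonzero element of $\mathrm{GF}(q^m)$ acts on each $\mathrm{GF}(q^m)$-coordinate of a vector (i.e., on each column of its matrix representation in $\mathrm{GF}(q)^{m \times n}$) as left multiplication by an invertible $m \times m$ matrix over $\mathrm{GF}(q)$—the regular representation of $\lambda$—it preserves the $\mathrm{GF}(q)$-rank of every codeword. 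Hence $\sigma_r(\lambda {\bf g}^{[n-r]})$ is the same for every nonzero $\lambda$; denote this common value $X$.

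Next I would invoke the counting identity from the proof of Lemma~\ref{lemma:c'}, namely $\sum_{{\bf c}' \in \mathcal{C}'} \sigma_r({\bf c}') = \tau_r$, together with the fact that $\mathcal{C} \oplus \mathcal{C}'$ is an $(n,n-r+1,r)$ MRD code, so $\tau_r = M(q,m,n,r,r) = {n \brack r}(q^m - 1)$. Because $\mathcal{C}$ itself has minimum distance $r+1 > r$, it contains no rank-$r$ codeword, so $\sigma_r({\bf 0}) = 0$. The sum therefore collapses to $(q^m - 1)\,X = {n \brack r}(q^m - 1)$, yielding $X = {n \brack r}$.

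The only nonroutine step is the rank-preservation observation that underlies the symmetry; everything else is bookkeeping, and one does not even need to invoke the quantitative bound of Lemma~\ref{lemma:c'} separately—it drops out of the averaging. One small caveat: the statement as written quantifies over all ${\bf c}' \in \mathcal{C}'$, but since $\sigma_r({\bf 0}) = 0 \neq {n \brack r}$, I would read it as implicitly restricted to nonzero ${\bf c}'$, which is the regime in which Lemma~\ref{lemma:c'} is already nontrivial.
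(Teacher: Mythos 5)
Your proof is correct, but it reaches the conclusion by a genuinely different route from the paper. The paper first establishes the \emph{upper bound} $\sigma_r({\bf c}') \leq \Ar(q,m,n,r+1,r) \leq \Ac(q,n,r,1) = {n \brack r}$ via Proposition~\ref{prop:A_As} (i.e., via the CRC--CDC correspondence), and then argues by contradiction: if some nonzero ${\bf c}'$ had $\sigma_r({\bf c}') < {n \brack r}$, the sum $\tau_r = \sum_{{\bf c}' : c_{n-r} \neq 0} \sigma_r({\bf c}')$ over the $q^m-1$ nonzero codewords would fall short of ${n \brack r}(q^m-1)$. You instead replace the upper bound with a symmetry argument: since $\mathcal{C}'$ is one-dimensional over $\mathrm{GF}(q^m)$ when $d=r+1$, and since $\mathcal{C} + \lambda{\bf g}^{[n-r]} = \lambda(\mathcal{C} + {\bf g}^{[n-r]})$ with scalar multiplication by $\lambda \neq 0$ acting as an invertible $\mathrm{GF}(q)$-linear map on columns (hence rank-preserving), all $q^m-1$ nonzero translates contribute equally, and the count ${n \brack r}$ drops out of the average. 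Your route is more self-contained---it avoids invoking the CDC machinery of Proposition~\ref{prop:A_As} entirely---but it exploits the one-dimensionality of $\mathcal{C}'$ and so is specific to $d=r+1$, whereas the paper's upper-bound step is the specialization of an argument that makes sense for general $d$. Your caveat about ${\bf c}' = {\bf 0}$ is well taken: the corollary as stated quantifies over all of $\mathcal{C}'$, yet $\sigma_r({\bf 0})=0$, so both your proof and the paper's implicitly restrict to nonzero ${\bf c}'$ (equivalently $c_{n-r} \neq 0$), consistent with the first part of the proof of Lemma~\ref{lemma:c'}.
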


\begin{proof}
First, by Proposition~\ref{prop:A_As}, $\sigma_r({\bf c}') \leq
\Ar(q,m,n,r+1,r) \leq \Ac(q,n,r,1) = {n \brack r}$ for all ${\bf c'}
\in \mathcal{C}'$. Suppose there exists ${\bf c}'$ such that
$\sigma_r({\bf c}') < {n \brack r}$. Then $\tau_r < {n \brack r}
(q^m-1)$, which contradicts $\tau_r = {n \brack r} (q^m-1)$.
\end{proof}

\begin{proposition}\label{prop:bound_Ar_gabidulin}
For all $q$, $1 \leq r < d \leq n \leq m$, $\Ar(q,m,n,d,r) \geq {n
\brack r} q^{n(r-d+1)}$, and a class of codes that satisfy this
bound can be constructed from Lemma~\ref{lemma:c'}.
\end{proposition}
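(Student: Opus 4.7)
The plan is to apply Lemma~\ref{lemma:c'} in the smallest admissible ambient degree, namely with $\mathrm{GF}(q^n)$ in place of $\mathrm{GF}(q^m)$, and then lift the resulting code to $\mathrm{GF}(q)^{m \times n}$ using monotonicity of $\Ar$ in the row count. The key observation is that since $r < d$, the exponent $q^{m'(r-d+1)}$ appearing in Lemma~\ref{lemma:c'} is a \emph{decreasing} function of the extension degree $m'$, so applying the lemma with the minimal legal choice $m' = n$ yields the sharpest bound.

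First, I would re-instantiate the construction preceding Lemma~\ref{lemma:c'} using generalized Gabidulin codes over $\mathrm{GF}(q^n)$, which is valid because the construction only requires the extension degree to be at least the length $n$. The lemma then guarantees a vector ${\bf c}' \in \mathcal{C}'$ such that the coset $\mathcal{C} + {\bf c}'$, viewed through the standard basis as a subset of $\mathrm{GF}(q)^{n \times n}$, contains at least ${n \brack r} q^{n(r-d+1)}$ elements of rank exactly $r$.

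Second, I would extract from $\mathcal{C} + {\bf c}'$ its rank-$r$ subset $\mathcal{D}$. By construction every element of $\mathcal{D}$ has rank $r$, and since the rank metric is translation-invariant and $\mathcal{C}$ is an $(n,n-d+1,d)$ MRD code, any two elements of $\mathcal{D}$ lie at rank distance at least $d$. Hence $\mathcal{D}$ is a CRC in $\mathrm{GF}(q)^{n \times n}$ with constant rank $r$, minimum rank distance at least $d$, and cardinality at least ${n \brack r} q^{n(r-d+1)}$, giving $\Ar(q,n,n,d,r) \geq {n \brack r} q^{n(r-d+1)}$.

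Third, I would lift $\mathcal{D}$ to $\mathrm{GF}(q)^{m \times n}$ by appending $m-n$ zero rows to each codeword; this operation preserves rank and rank distance and so produces a CRC in $\mathrm{GF}(q)^{m \times n}$ of the same cardinality and the same distance parameters. Equivalently, this is the monotonicity $\Ar(q,m,n,d,r) \geq \Ar(q,n,n,d,r)$ noted earlier, and yields $\Ar(q,m,n,d,r) \geq {n \brack r} q^{n(r-d+1)}$. There is no real obstacle; the only subtle point is recognizing that the lemma must be applied at $m' = n$ rather than the given $m$, and that the transition from a translate of an MRD code to a CRC is legitimate because rank distance is translation-invariant.
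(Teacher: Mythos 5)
Your proposal is correct and follows essentially the same route as the paper: the paper likewise applies Lemma~\ref{lemma:c'} to get a CRC of cardinality at least ${n \brack r} q^{m(r-d+1)}$ and then concludes via the monotonicity $\Ar(q,m,n,d,r) \geq \Ar(q,n,n,d,r) \geq {n \brack r} q^{n(r-d+1)}$, which is exactly your instantiation at extension degree $n$ followed by the zero-row lift. Your explicit justification that the rank-$r$ subset of the translate inherits minimum distance $d$ from translation invariance is a detail the paper leaves implicit, but the argument is the same.
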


\begin{proof}
The codewords of rank $r$ in a code considered in
Lemma~\ref{lemma:c'} form a CRC in $\mathrm{GF}(q)^{m \times n}$
with constant rank $r$, minimum distance $d$, and cardinality $\geq
{n \brack r} q^{m(r-d+1)}$. Therefore, $\Ar(q,m,n,d,r) \geq {n
\brack r} q^{m(r-d+1)}$. The proof is concluded by noting that
$\Ar(q,m,n,d,r) \geq \Ar(q,n,n,d,r) \geq {n \brack r}q^{n(r-d+1)}$.
\end{proof}
%We remark that $q^{n(r-d+1)}\geq q^{m(r-d+1)}$ when $d>r$.

\begin{corollary}\label{cor:A_r+1}
For all $q$, $1 \leq r < n \leq m$, $\Ar(q,m,n,r+1,r) = {n \brack r}
= \Ac(q,n,r,1)$.
\end{corollary}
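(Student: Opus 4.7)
The plan is to combine the upper bound machinery of Section \ref{sec:connection} with the explicit Gabidulin-based construction just developed. Specifically, both directions of the equality $\Ar(q,m,n,r+1,r) = {n \brack r}$ have essentially already been prepared, so the proof should be little more than an assembly of earlier results.

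For the upper bound, I would apply Proposition \ref{prop:A_As} with $d=1$ and $p=0$, which immediately gives $\Ar(q,m,n,r+1,r) \leq \Ac(q,n,r,1)$. Since the whole Grassmannian $E_r(q,n)$ trivially forms a CDC of minimum injection distance $1$, we have $\Ac(q,n,r,1) = {n \brack r}$, and the upper bound follows. This is clean and requires no new computation.

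For the lower bound, I would invoke Proposition \ref{prop:bound_Ar_gabidulin} at $d=r+1$: it yields $\Ar(q,m,n,r+1,r) \geq {n \brack r} q^{n(r-(r+1)+1)} = {n \brack r}$. Alternatively, and perhaps more informatively as a constructive statement, one can cite Corollary \ref{cor:c'_r+1} directly, which says that every translate $\mathcal{C} + {\bf c}'$ contains exactly ${n \brack r}$ codewords of rank $r$ when $d = r+1$, so the set of rank-$r$ codewords in any such translate is already an explicit CRC achieving the bound.

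Combining the two bounds gives $\Ar(q,m,n,r+1,r) = {n \brack r} = \Ac(q,n,r,1)$, as claimed. There is no substantial obstacle here: the only thing to verify is the consistency of parameters when applying Proposition \ref{prop:A_As} (namely that $1 \leq d \leq r$ holds for $d=1$ and $r \geq 1$, which is guaranteed by the hypothesis $r < n$ implying $r \geq 1$, and that $2r \leq n$ is not actually needed since we are only using the upper half of \eqref{eq:A_As1}, which comes from Proposition \ref{prop:E(C)} and requires only $r \leq n \leq m$). This remark should be made explicit in the write-up to avoid any ambiguity about the parameter ranges.
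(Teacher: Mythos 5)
Your proposal is correct and follows essentially the same route as the paper, which obtains the result by combining Proposition~\ref{prop:A_As} (upper bound, with $d=1$) and Proposition~\ref{prop:bound_Ar_gabidulin} (lower bound, with $d=r+1$). Your parameter checks and the alternative constructive reference to Corollary~\ref{cor:c'_r+1} are consistent with the paper's statements and add nothing that changes the argument.
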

This can be shown by combining Propositions~\ref{prop:A_As} and
\ref{prop:bound_Ar_gabidulin}.
%\begin{proof}
%Combine Proposition~\ref{prop:A_As} and
%Proposition~\ref{prop:bound_Ar_gabidulin}.
%\end{proof}
We note that ${n \brack r}$ is independent of $m$. We also remark
that the lower bound in Proposition~\ref{prop:bound_Ar_gabidulin} is
also trivial for $d$ approaching $2r$. Since the proof is only
partly constructive, computer search can be used to help find better
results for small parameter values.

By Proposition \ref{prop:A_As}, the lower bounds on $\Ar(q,m,n,d,r)$ derived in this section for $d > r$ can be viewed as lower bounds on the maximum cardinality of a corresponding CDC. Although in Corollary \ref{cor:A_r+1}, we obtain a tight bound for $d=r+1$, we remark that the bound in Proposition \ref{prop:bound_Ar_gabidulin} does not improve on the lower bounds on $\Ac(q,n,r,d-r)$ previously derived in the literature when $d > r+1$. However, the construction of good CDCs from CRCs is an interesting topic for future work.

\subsection{Asymptotic results}\label{sec:asymptotic}

We study the asymptotic behavior of CRCs using the following set of
normalized parameters: $\nu = \frac{n}{m}$, $\rho = \frac{r}{m}$,
and $\delta = \frac{d}{m}$. By definition, $0 \leq \rho, \delta \leq
\nu$, and since we assume $n \leq m$, $\nu \leq 1$. We consider the
asymptotic rate defined as $\ar(\nu,\delta,\rho) \df \lim_{m
\rightarrow \infty} \sup \left[\log_{q^{m^2}} \Ar(q,m,n,d,r)
\right]$.
%Adapting the results in \cite{silva_it08} using the parameters
%defined above, we obtain $\as(\delta, \rho) =
%\min\{(1-\rho)(\rho-\delta),\rho(1-\rho-\delta)\}$ for $0 \leq
%\delta \leq \min\{\rho,1-\rho\}$ and $\as(\delta,\rho) = 0$
%otherwise.
We now investigate how $\Ar(q,m,n,d,r)$ behaves as the parameters
tend to infinity. Without loss of generality, we only consider the
case where $0 \leq \delta \leq \min\{ \nu, 2\rho\}$, since
$\ar(\nu,\delta,\rho) = 0$ for $\delta
> 2\rho$.

\begin{proposition}\label{prop:a}
For $0 \leq \delta \leq \rho$, $\ar(\nu,\delta,\rho) =
\rho(1+\nu-\rho) - \delta$. For $\rho \leq \delta$, we have to
distinguish three cases. First, for $2\rho \leq \nu$,
\begin{align}
	\nonumber
    &\max \left\{\frac{(1-\rho)(\nu-\rho)}{1+\nu-2\rho}(2\rho-\delta),
    \rho(2\nu-\rho) - \nu\delta \right\}\\
	\label{eq:a_p1}
    &\leq \ar(\nu,\delta,\rho)\leq (\nu-\rho)(2\rho-\delta).
\end{align}
Second, for $\nu \leq 2\rho \leq 1$,
\begin{align}
	\nonumber
    &\max \left\{ \rho(1-\rho)(\nu-\delta), \rho(2\nu-\rho) - \nu\delta \right\}\\
    \label{eq:a_p2}
    &\leq \ar(\nu,\delta,\rho) \leq \rho(\nu-\delta).
\end{align}
Third, for $2\rho \geq 1$,
\begin{align}
    \nonumber
    &\max \left\{\frac{\rho}{2}(1 + \nu - 2 \rho - \delta), \rho(2\nu-\rho) - \nu\delta, 0 \right\}\\
    \label{eq:a_p3}
    &\leq \ar(\nu,\delta,\rho) \leq \rho(\nu-\delta).
\end{align}
\end{proposition}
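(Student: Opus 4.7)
The plan is to convert the non-asymptotic bounds of Section \ref{sec:constant_rank} into asymptotic statements by normalizing with $\log_{q^{m^2}}$, using the estimates $\log_{q^{m^2}}{n \brack r} \to \rho(\nu-\rho)$ from (\ref{eq:Gaussian}) and $\log_{q^{m^2}}\alpha(m,r-d+1) \to \rho-\delta$, together with the analogous asymptotic forms of the CDC bounds in (\ref{eq:bounds_Ac}). Since the bounds in (\ref{eq:bounds_Ac}) are stated only for $r \leq n/2$, I would also appeal throughout to the duality $\Ac(q,n,r,d) = \Ac(q,n,n-r,d)$ whenever the subspace dimension exceeds half the ambient dimension. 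The splitting into four parametric regions in the proposition is precisely dictated by when this dualization is needed in the two $\Ac$-factors appearing in Proposition \ref{prop:A_As}.

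The region $\delta \leq \rho$ is settled by a matched pair. The upper bound is the asymptotic form of Proposition \ref{prop:singleton_johnson}, $\Ar(q,m,n,d,r) \leq {n \brack r}\alpha(m,r-d+1)$, while the lower bound is the asymptotic form of Proposition \ref{prop:Mdr}, $\Ar(q,m,n,d,r) > {n \brack r}q^{m(r-d)}$; both sides collapse to $\rho(\nu-\rho)+(\rho-\delta) = \rho(1+\nu-\rho)-\delta$, giving the claimed equality.

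For $\delta > \rho$, I would reduce to CDC quantities via Proposition \ref{prop:A_As}. The upper bound in each subcase is the normalized upper CDC bound in (\ref{eq:bounds_Ac}) applied to $\Ac(q,n,r,d-r)$: when $2\rho \leq \nu$ no dualization is needed and the exponent is $(2\rho-\delta)(\nu-\rho)$; when $2\rho > \nu$ one dualizes in $n$ and the exponent becomes $\rho(\nu-\delta)$. For the lower bounds, Proposition \ref{prop:bound_Ar_gabidulin} gives ${n\brack r}q^{n(r-d+1)}$ in every subcase, normalizing to $\rho(2\nu-\rho)-\nu\delta$. The remaining lower-bound term in each subcase comes from maximizing $\min\{\Ac(q,n,r,d-r+p), \Ac(q,m,r,r-p)\}$ from Proposition \ref{prop:A_As} over $p$, using the lower CDC bound from (\ref{eq:bounds_Ac}) with the appropriate dualizations: with $\pi = p/m$ one balances two linear functions of $\pi$. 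In subcase $2\rho \leq \nu$ neither factor needs dualization and balancing $(\nu-\rho)(2\rho-\delta-\pi)$ against $(1-\rho)\pi$ yields $\pi^* = (\nu-\rho)(2\rho-\delta)/(1+\nu-2\rho)$ and value $(1-\rho)(\nu-\rho)(2\rho-\delta)/(1+\nu-2\rho)$; in subcase $\nu < 2\rho \leq 1$ only the $n$-term dualizes and the balance $\rho(\nu-\delta-\pi) = (1-\rho)\pi$ yields $\rho(1-\rho)(\nu-\delta)$; in subcase $2\rho \geq 1$ both terms dualize and the balance $\rho(\nu-\delta-\pi) = \rho(1-2\rho+\pi)$ yields $\frac{\rho}{2}(1+\nu-2\rho-\delta)$, with the trivial bound $0$ taking over precisely when this expression becomes negative (equivalently, when $\pi^*$ falls below the admissible floor $2\rho-1$).

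The main obstacle is bookkeeping rather than any deep estimate: one must track which of $r$, $n-r$, $m-r$ plays the role of the smaller dimension in each $\Ac$-factor as $\rho$ crosses the thresholds $\nu/2$ and $1/2$, and verify in each subcase that the optimizing $\pi^*$ falls in the admissible interval $[\max\{0,2\rho-1\}, \min\{2\rho-\delta, \nu-\delta\}]$ identified after Proposition \ref{prop:A_As}. Once this case analysis is in place, the $K_q$ prefactors in (\ref{eq:Gaussian}) and the polynomial slack in the cited $\alpha$ estimate vanish under $\log_{q^{m^2}}$, and each bound reduces to a routine linear optimization in $\pi$ yielding the stated closed form.
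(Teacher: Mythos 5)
Your proposal is correct and follows essentially the same route as the paper's proof: the matched pair of Propositions~\ref{prop:singleton_johnson} and \ref{prop:Mdr} for $\delta\leq\rho$, Proposition~\ref{prop:bound_Ar_gabidulin} for the term $\rho(2\nu-\rho)-\nu\delta$, and the optimization over $p$ in Proposition~\ref{prop:A_As} combined with (\ref{eq:bounds_Ac}) (with the orthogonal-complement identity $\Ac(q,n,r,d)=\Ac(q,n,n-r,d)$, which the paper uses implicitly) for the remaining bounds; all of your balanced values $\pi^*$ and resulting exponents agree with those in the appendix. No gaps.
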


The proof of Proposition~\ref{prop:a} is given in
Appendix~\ref{app:prop:a}.

Proposition~\ref{prop:bound_Mdr} indicates that the codewords of a
given rank in a linear MRD code form asymptotically optimal CRCs. In
particular, Proposition~\ref{prop:a} shows that the set of codewords
with rank $n$ in an $(n,n-d+1,d)$ linear MRD code constitutes a CRC
of rank $n$ and asymptotic rate of $\nu-\delta$, which is equal to
the asymptotic rate of an optimal rank metric code
\cite{gadouleau_it08_covering}.

We can split the range of $\delta$ into two regions: when $\delta
\leq \rho$, the asymptotic rate of CRCs is determined due to the
construction of good CRCs when $d \leq r$; when $\delta \geq \rho$,
we only have bounds on the asymptotic rate of CRCs. Also, the lower
bounds based on the connection between CDCs and CRCs (the first
lower bound in the LHS of (\ref{eq:a_p1}), (\ref{eq:a_p2}), and
(\ref{eq:a_p3}) are tighter for $2\rho \leq \nu$ and on the other
hand become trivial for $\rho$ approaching $1$. The bounds on
$\ar(\nu,\delta,\rho)$ in the three cases in (\ref{eq:a_p1}), (\ref{eq:a_p2}), and
(\ref{eq:a_p3}) are illustrated in
%Figure~\ref{fig:bounds}
Figures~\ref{fig:a_0.2}, \ref{fig:a_0.4}, and \ref{fig:a_0.6}
for $\nu =3/4$ and $\rho=1/5, 2/5, 3/5$, respectively.

%\begin{figure}[htb]
%\begin{center}
%\scalebox{0.7}{\includegraphics{ar_0_2.eps}}
%\caption{Asymptotic bounds on the maximal rate of a CRC as a
%function of $\delta$, with $\nu = 3/4$ and $\rho=1/5$.\label{fig:bounds_a}}
%\end{center}
%\end{figure}
%
%
%\begin{figure}[htb]
%\begin{center}
%\scalebox{0.7}{\includegraphics{ar_0_4.eps}}
%\caption{Asymptotic bounds on the maximal rate of a CRC as a
%function of $\delta$, with $\nu = 3/4$ and $\rho=2/5$.\label{fig:bounds_b}}
%\end{center}
%\end{figure}
%
%\begin{figure}[htb]
%\begin{center}
%\scalebox{0.7}{\includegraphics{ar_0_6.eps}}
%\caption{Asymptotic bounds on the maximal rate of a CRC as a
%function of $\delta$, with $\nu = 3/4$ and $\rho=3/5$.\label{fig:bounds_c}}
%\end{center}
%\end{figure}

%\begin{figure}[htb]
%\begin{center}
%\mbox{\subfigure[$\rho = 1/5$] {\scalebox{0.5}
%%{\input{dg_central.latex}}}
%{\includegraphics{ar_0_2.eps}}}
%\subfigure[$\rho = 2/5$]
%{\scalebox{0.5}
%%{\input{ dg_distributed_dec2204.latex}}}}
%{\includegraphics{ar_0_4.eps}}}
%\subfigure[$\rho = 3/5$]
%{\scalebox{0.5}
%%{\input{ dg_distributed_dec2204.latex}}}}
%{\includegraphics{ar_0_6.eps}}}
%}\caption{Asymptotic bounds on the maximal rate of a CRC as a
%function of $\delta$, with $\nu = 3/4$ and $\rho=1/5, 2/5, 3/5$.\label{fig:bounds}}
%\end{center}
%\end{figure}

\begin{figure}
\begin{center}
\includegraphics[scale = 0.5]{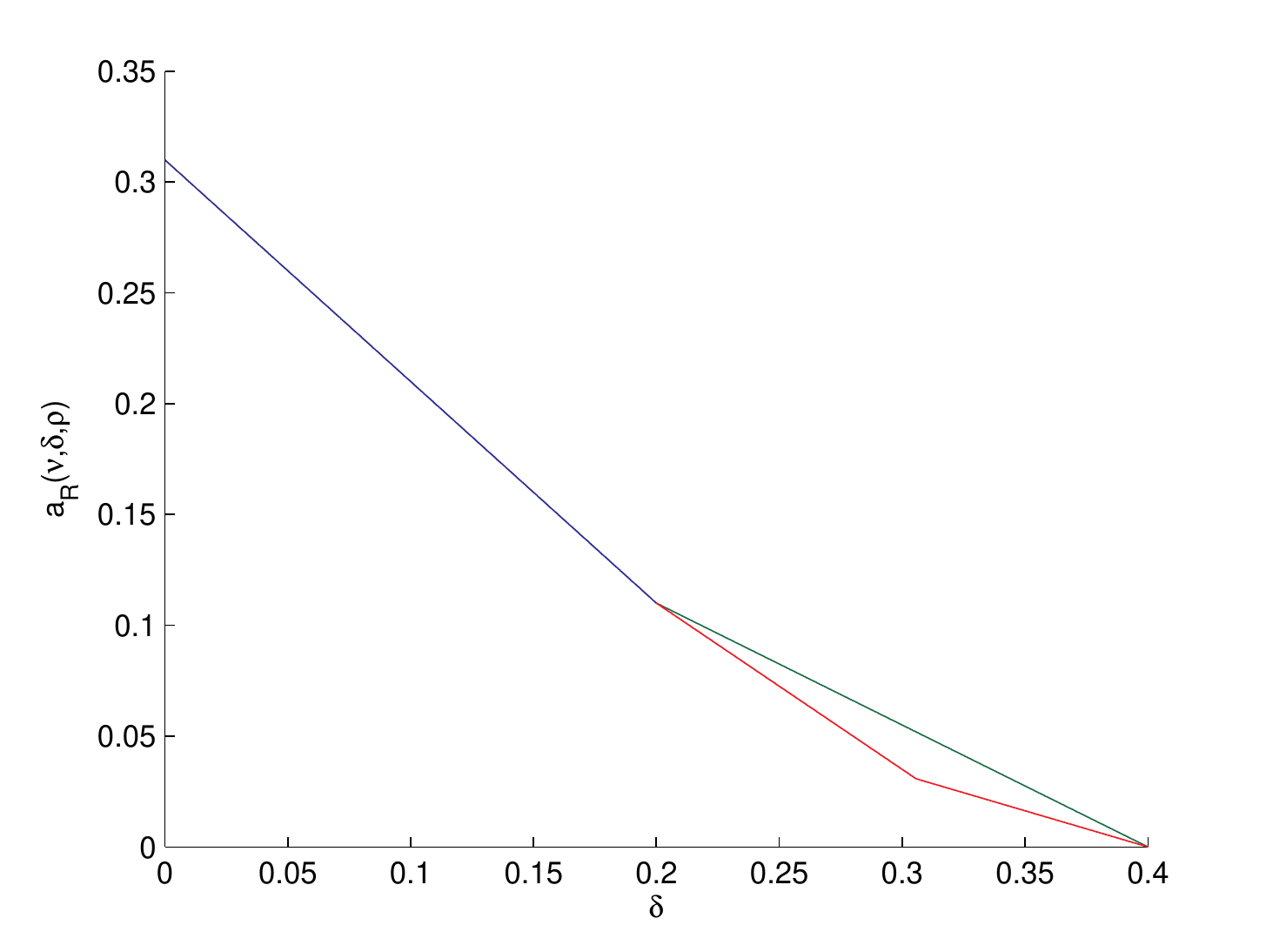}
\end{center}
\caption{Asymptotic bounds on the maximal rate of a CRC as a
function of $\delta$, with $\nu = 3/4$ and $\rho = 1/5$.}\label{fig:a_0.2}
\end{figure}

\begin{figure}
\begin{center}
\includegraphics[scale = 0.5]{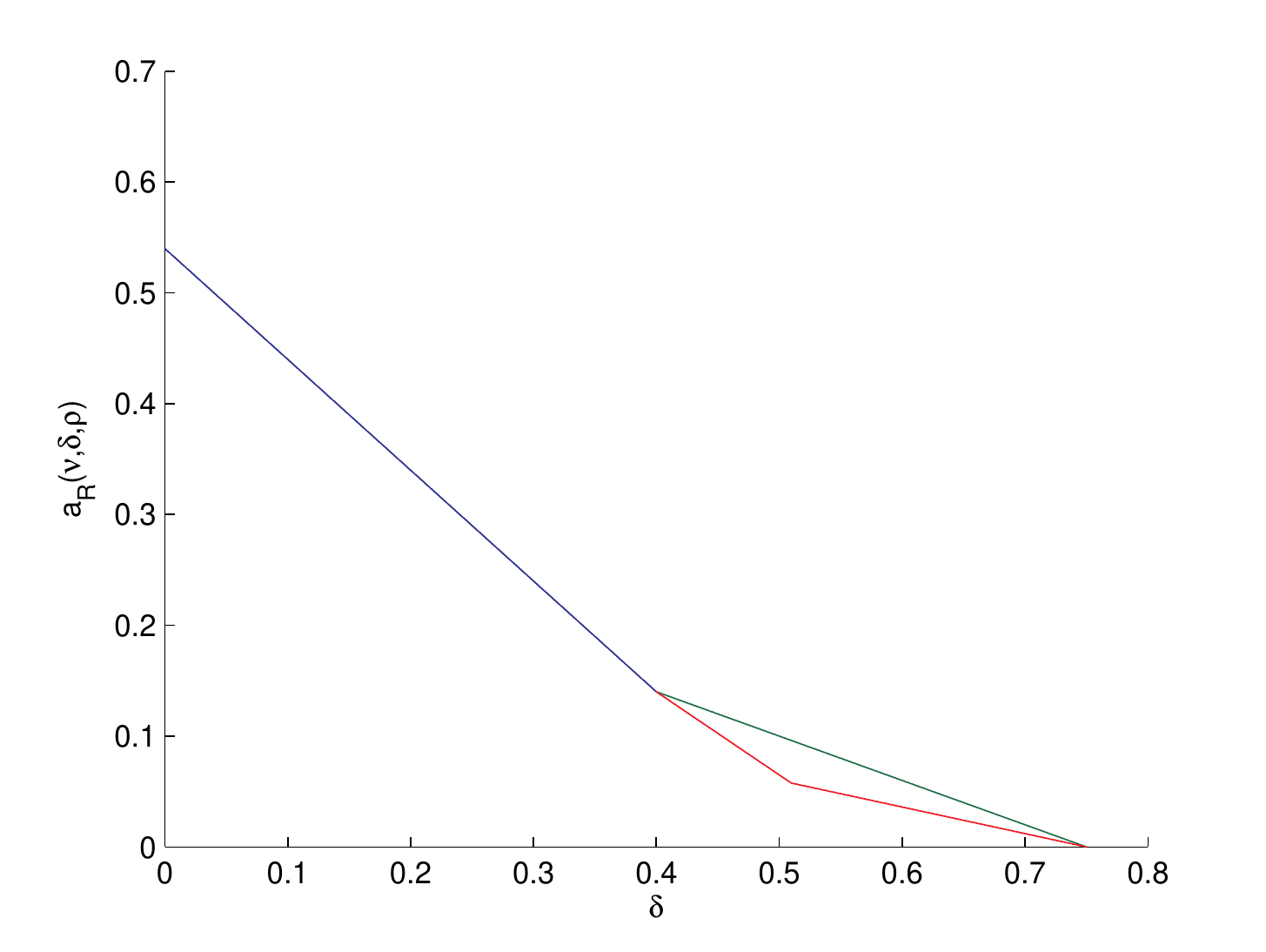}
\end{center}
\caption{Asymptotic bounds on the maximal rate of a CRC as a
function of $\delta$, with $\nu = 3/4$ and $\rho = 2/5$.}\label{fig:a_0.4}
\end{figure}

\begin{figure}
\begin{center}
\includegraphics[scale = 0.5]{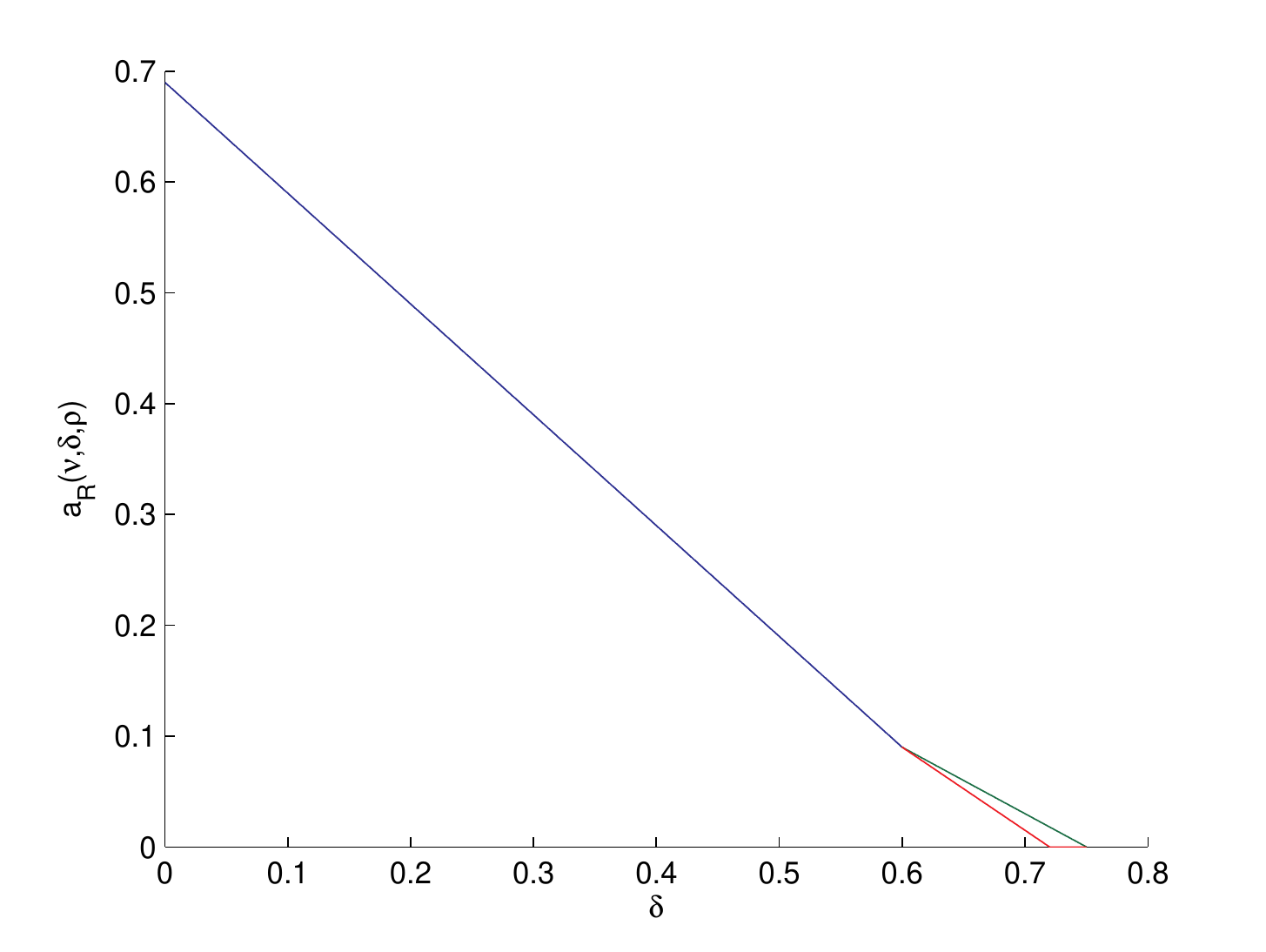}
\end{center}
\caption{Asymptotic bounds on the maximal rate of a CRC as a
function of $\delta$, with $\nu = 3/4$ and $\rho = 3/5$.}\label{fig:a_0.6}
\end{figure}

\section{Conclusion}
Rank metric codes and CDCs have been considered
for error control in noncoherent random linear network coding. It has been shown that these two
classes of codes are related by the lifting operation, which turns
an optimal rank metric code into a nearly optimal constant-dimension
code. However, liftings of rank metric codes are not optimal constant-dimension codes. In this paper, we first established a novel connection between CRCs and CDCs, by showing that optimal
CRCs over matrices with sufficiently many rows lead
to optimal CDCs with a related minimum injection distance. In comparison to previously proposed constructions of CDCs, our construction based on CRCs guarantees the optimality of CDCs, and hence is a promising approach. Despite previous works on rank metric codes in general, CRCs have received little attention in the literature. We hence investigated the properties of CRCs, derived
bounds on their cardinalities, and proposed explicit constructions of
CRCs in some cases. Although we have not been able to propose constructions of optimal CRCs in all cases, we hope our novel connection between CRCs and CDCs and investigation of CRCs can lead to constructions of optimal CDCs, which is the topic of our future work.
%Although we were not able to construct optimal
%constant-dimension codes from constant-rank codes, we determined a
%construction based on generalized Gabidulin codes which opens the
%door on how to map constant-rank codes into constant-dimension
%codes.

\section{Acknowledgment}
The authors are grateful to the anonymous reviewers and the Associate Editor Dr.~Ludo Tolhuizen for their constructive comments, which have resulted in improvements in both the results and the presentation of the paper. In particular, comments by one of the reviewers have slightly improved the results in Theorem~\ref{th:x_y_E_F} and Propositions~\ref{prop:E(C)}, \ref{prop:C_Gamma_Delta}, \ref{prop:A_As}, and \ref{prop:a}.

\appendix

\subsection{Proof of Proposition~\ref{prop:bassalygo_s} (Bassalygo-Elias bound for rank metric codes)}
\label{app:prop:bassalygo_s}

\begin{proof}
For all ${\bf X} \in \mathrm{GF}(q)^{l \times k}$ with rank $s$ and
${\bf C} \in \mathcal{C}$, we define $f_r({\bf X}, {\bf C}) = 1$ if
$\dr({\bf X}, {\bf c}) = r$ and $f_r({\bf X}, {\bf C}) = 0$
otherwise. Note that $\sum_{{\bf X} : \rk({\bf X}) = s} f_r({\bf X},
{\bf C}) = \Jr(q,l,k,s,r,\rk({\bf C}))$ for all ${\bf C} \in
\mathcal{C}$ and $\sum_{{\bf c} \in C} f_r({\bf X}, {\bf C}) =
\left| \left\{ {\bf Y} \in C - {\bf X} : \rk({\bf Y}) = r \right\}
\right| \leq \Ar(q,l,k,d,r)$ for all ${\bf X} \in \mathrm{GF}(q)^{l
\times k}$. We obtain
\begin{align}
    \label{eq:b_extended1_s}
    &\sum_{{\bf C} \in \mathcal{C}} \sum_{{\bf X} : \rk({\bf X}) = s}
    f_r({\bf X}, {\bf C}) = \sum_{i=0}^n A_i \Jr(q,l,k,s,r,i),\\
    %\nonumber
    \label{eq:b_extended3_s}
    &\sum_{{\bf X}: \rk({\bf X}) = s} \sum_{{\bf C} \in \mathcal{C}}
    f_r({\bf X}, {\bf C})\leq \Nr(q,l,k,s) \Ar(q,l,k,d,r).
\end{align}
Combining~(\ref{eq:b_extended1_s}) and~(\ref{eq:b_extended3_s}), we
obtain
\begin{equation}\label{eq:bassalygo_s2}
    \Ar(q,l,k,d,r) \geq \frac{\sum_{i=0}^n A_i
    \Jr(q,l,k,s,r,i)}{\Nr(q,l,k,s)}.
\end{equation}

Suppose $d > r+1$. For all ${\bf C} \in \mathcal{C}$, let us denote
the set of matrices with rank $s$ at distance at most $d-r-1$ from
${\bf C}$ as $S_{\bf C}$, and $S \df \bigcup_{{\bf C} \in
\mathcal{C}} S_{\bf C}$. For ${\bf X} \in S_{\bf C}$, we have
$\dr({\bf X}, {\bf C}) \leq d-r-1 < r$. We have for ${\bf C}' \in
\mathcal{C}$ and ${\bf C}' \neq {\bf C}$, $\dr({\bf X}, {\bf C}')
\geq \dr({\bf C}, {\bf C}') - \dr({\bf X}, {\bf C}) \geq r+1$; and
hence $f_r({\bf X}, {\bf C}') = 0$ for all ${\bf C}' \in
\mathcal{C}$. Therefore, $\sum_{{\bf C} \in \mathcal{C}} f_r({\bf
X}, {\bf C}) = 0$ for all ${\bf X} \in S$ and
\begin{align} \nonumber
    &\sum_{{\bf X} : \rk({\bf X}) = s} \sum_{{\bf C} \in \mathcal{C}}
    f_r({\bf X}, {\bf C})\\
    \nonumber
    &= \sum_{{\bf X} \in S} \sum_{{\bf C} \in C} f_r({\bf X}, {\bf C})
    + \sum_{\stackrel{{\bf X} \notin S}{\rk({\bf X}) = s}}
    \sum_{{\bf C} \in \mathcal{C}} f_r({\bf X}, {\bf C})\\
    \label{eq:b_extended2_s}
    &\leq \left[\Nr(q,l,k,s)- |S|\right]\Ar(q,l,k,d,r).
\end{align}

Since $d-r-1 < \frac{d}{2}$, the balls with radius $d-r-1$ around
the codewords are disjoint and hence $|S| = \sum_{i=0}^n A_i
\sum_{t=0}^{d-r-1} \Jr(q,l,k,s,t,i)$.
Combining~(\ref{eq:b_extended1_s}) and~(\ref{eq:b_extended2_s}), we
obtain
\begin{align}
	\nonumber
    &\Ar(q,l,k,d,r)\\
    \label{eq:bassalygo_2}
    &\geq \frac{\sum_{i=0}^n A_i
    \Jr(q,l,k,s,r,i)}{\Nr(q,l,k,s) - \sum_{i=0}^n A_i \sum_{t=0}^{d-r-1}
    \Jr(q,l,k,s,t,i)}.
\end{align}
Note that (\ref{eq:bassalygo_s2}) and (\ref{eq:bassalygo_2}) both
hold for any $s$ and rank spectrum $\left\{A_i\right\}$.
Furthermore, since $\Ar(q,l,k,d,r)$ is a non-decreasing function of
$l$ and $k$, $\Ar(q,m,n,d,r) \geq \Ar(q,l,k,d,r)$ for all
$\max\{r,d\} \leq k \leq n$ and $k \leq l \leq m$. Thus, we have
(\ref{eq:bassalygo_s}) and (\ref{eq:bassalygo_extended_s}).
\end{proof}

\subsection{Proof of Proposition~\ref{prop:bound_Mdr}}
\label{app:prop:bound_Mdr}

\begin{proof}
By Proposition~\ref{prop:singleton_johnson}, we obtain
$\Ar(q,m,m,d,m) \leq \alpha(m,m-d+1)$ for $r=n=m$ and
$\Ar(q,m,n,d,r) \leq {n \brack r}\alpha(m,r-d+1) < {n \brack
r}q^{m(r-d+1)}$ otherwise. We now derive lower bounds on
$M(q,m,n,d,r)$. Again, $M(q,m,n,d,r) = {n \brack r} \sum_{j=d}^r
(-1)^j \mu_j$ where $\mu_j > \mu_{j-1}$ for $d+1 \leq j \leq r$.
Therefore, when needed, we shall only consider the last terms in the
summation.

First, $M(q,m,m,m-1,m) = (q^{2m} - 1) - \frac{q^m-1}{q-1}(q^m-1) >
\frac{q-2}{q-1}(q^{2m}-1) > \frac{q-2}{q-1} \alpha(m,2)$, which
leads to (\ref{eq:B_2}). For $q=2$, $M(2,m,m,m-1,m) = 2(2^m-1) =
(2^{m-1}-1)^{-1}\alpha(m,2)$, which results in (\ref{eq:B_1}).
Second, when $r=n=m$ and $d=m-2$,
\begin{align*}
    &M(q,m,m,m-2,m)\\
    &= (q^{3m}-1) - \frac{\alpha(m,1)}{q-1} (q^{2m}-1) +
    \frac{\alpha(m,2)}{(q^2-1)(q-1)} (q^m-1)\\
    &> \frac{q-2}{q-1} \alpha(m,1)(q^{2m}-1) + \frac{1}{(q^2-1)(q-1)}
    \alpha(m,2)(q^m-1)\\
    &> \frac{(q^2-1)(q-2) + 1}{(q^2-1)(q-1)} \alpha(m,1),
\end{align*}
which leads to (\ref{eq:B_3}). Third, when $r=n=m$ and $d<m-2$, by
considering the last four terms in the summation, we obtain
\begin{align*}
    &M(q,m,m,d,m)\\
    &> (q^{m(m-d+1)}-1) - \frac{\alpha(m,1)}{q-1} (q^{m(m-d)}-1)\\
    &+ \frac{\alpha(m,2)}{(q^2-1)(q-1)} (q^{m(m-d-1)}-1)\\
    &- \frac{\alpha(m,3)}{(q^3-1)(q^2-1)(q-1)} (q^{m(m-d-2)}-1)\\
    &> \left\{ \frac{q-2}{q-1} + \frac{q^3-2}{(q^3-1)(q^2-1)(q-1)} \right\} \alpha(m,m-d+1),
\end{align*}
which results in (\ref{eq:B_4}). Fourth, when $d < r < m$, by
considering the last two terms in the summation, we obtain
\begin{align*}
    &M(q,m,n,d,r)\\
    &\geq {n \brack r} \left( (q^{m(r-d+1)} - 1) - {r \brack 1}(q^{m(r-d)} - 1) \right) \\
    &\geq {n \brack r} \left(q^{m(r-d+1)} - 1 -q^{m(r-d) + r} + q^r\right)\\
    &\geq {n \brack r} q^{m(r-d+1)} (1-q^{r-m}).
\end{align*}
Therefore, since $r < m$, $B(q,m,n,d,r) < (1-q^{r-m})^{-1} \leq
\frac{q}{q-1}$, which leads to (\ref{eq:B_5}).
\end{proof}

\subsection{Proof of Proposition~\ref{prop:a}} \label{app:prop:a}

\begin{proof}
We first derive a lower bound on $\ar(\nu,\delta,\rho)$. For $d \leq
r$, Proposition~\ref{prop:Mdr} yields $\Ar(d,r) \geq q^{r(n-r)
+ m(r-d)}$, which asymptotically becomes $\ar(\nu,\delta,\rho) \geq
\rho(1+\nu-\rho) - \delta$ for $\delta \leq \rho$. Similarly, for $d
> r$, Proposition~\ref{prop:bound_Ar_gabidulin} yields
$\Ar(q,m,n,d,r) \geq q^{r(n-r) + n(r-d+1)}$, which asymptotically
becomes $\ar(\nu,\delta,\rho) \geq \rho(2\nu-\rho) - \nu \delta$ for
$\delta \geq \rho$.

Proposition~\ref{prop:A_As} and (\ref{eq:bounds_Ac}) yield $\log_q
\Ar(d,r) \geq \min\{(n-r)(2r-d-p+1), (m-r)(p+1) \}$ for $d
> r$ and $2r \leq n$. Treating the two terms as functions and
assuming that $p$ is real, the lower bound is maximized when $p =
\frac{(n-r)(2r-d+1) - m+r}{m+n-2r}$. Using $p = \left \lfloor
\frac{(n-r)(2r-d+1) - m+r}{m+n-2r}\right\rfloor$, asymptotically we
obtain $\ar(\nu,\delta,\rho) \geq
\frac{(1-\rho)(\nu-\rho)}{1+\nu-2\rho}(2\rho-\delta)$ for $2\rho
\leq \nu$.

For $d > r$ and $n \leq 2r \leq m$, Proposition~\ref{prop:A_As} and
(\ref{eq:bounds_Ac}) lead to $\log_q \Ar(d,r) \geq \min\{
r(n-d-p+1), (m-r)(p+1)\}$. After maximizing this expression over
$p$, we asymptotically obtain $\ar(\nu,\delta,\rho) \geq
\rho(1-\rho)(\nu-\delta)$ for $\nu \leq 2\rho \leq 1$.

For $d > r$ and $2r \geq m$, Proposition~\ref{prop:A_As} and
(\ref{eq:bounds_Ac}) lead to $\log_q \Ar(d,r) \geq \min\{
r(n-d-p+1), r(m-2r+p+1)\}$. After maximizing this expression over
$p$, we asymptotically obtain $\ar(\nu,\delta,\rho) \geq
\frac{\rho}{2}(1 + \nu - 2\rho - \delta)$ for $2\rho \geq 1$.

We now derive an upper bound on $\ar(\nu,\delta,\rho)$. First,
Proposition~\ref{prop:singleton_johnson} gives $\Ar(d,r) < {n
\brack r} q^{m(r-d+1)} < K_q^{-1} q^{r(n-r) + m(r-d+1)}$ for $d \leq
r$, which asymptotically becomes $\ar(\nu,\delta,\rho) \leq
\rho(1+\nu-\rho) - \delta$ for $\rho \geq \delta$. Second, by
Proposition~\ref{prop:A_As}, we obtain $\ar(\nu, \delta, \rho) \leq
\lim_{m \rightarrow \infty} \sup \left[ \log_{q^{m^2}}
\Ac(q,n,r,d-r) \right] = \min\{ (\nu-\rho)(2\rho-\delta),
\rho(\nu-\delta) \}$ for $\rho \leq \delta \leq \min\{2\rho,\nu\}$.
\end{proof}

%\bibliographystyle{IEEEtran}
%\bibliography{gpt}

\bibliographystyle {IEEEtr}

\end{document}